\theoremstyle{theorem}
\newtheorem{theorem}{Theorem}
\newtheorem{proposition}[theorem]{Proposition}
\newtheorem{lemma}[theorem]{Lemma}
\newtheorem{corollary}[theorem]{Corollary}
\theoremstyle{definition}
\newtheorem{definition}[theorem]{Definition}
\newtheorem{remark}[theorem]{Remark}
\newtheorem{example}[theorem]{Example}
\def\r{{\textbf{r}}}
\def\p{{\textbf{p}}}
\def\b{{\textbf{b}}}
\def\a{\alpha}
\def\bb{\beta}
\def\l{\lambda}
\def\cT{\mathcal T}
\def\cC{\mathcal C}
\def\cR{\mathcal R}
\def\cS{\mathcal S}
\def\NN{{\mathbb{N}}}
\def\ovl#1{\overline{#1}}
\def\t#1{\widetilde{#1}}
\def\disp{\displaystyle }
\author[G.~Ch\`eze]{Guillaume Ch\`eze}
\author[E.~Fieux]{Etienne Fieux}
\address{Guillaume Ch\`eze, Etienne Fieux: Institut de Math\'ematiques de Toulouse\\
Universit\'e Paul Sabatier \\
118 route de Narbonne\\
31 062 TOULOUSE cedex 9, France}
\email{guillaume.cheze@math.univ-toulouse.fr}
\email{etienne.fieux@math.univ-toulosue.fr}
\begin{document}

\title{The Inversion Paradox and Ranking Methods in Tournaments}
\markright{The Inversion Paradox and Ranking Methods}

\maketitle

\begin{abstract}
This article deals with ranking methods. We study the situation where a tournament between $n$ players $P_1$, $P_2$, \ldots $P_n$ gives the ranking  $P_1 \succ P_2 \succ \cdots \succ P_n$, but, if the results of $P_n$ are no longer taken into account (for example $P_n$ is suspended for doping), then the ranking becomes $P_{n-1} \succ P_{n-2} \succ \cdots \succ P_2 \succ P_1$. If such a situation arises, we call it an inversion paradox. In this article, we give a sufficient condition for the inversion paradox to occur. More precisely, we give an impossibility theorem. We prove that if a ranking method satisfies three reasonable properties (the ranking must be natural, reducible by Condorcet tournaments and satisfies the long tournament property) then we cannot avoid the inversion paradox, i.e., there are tournaments where the inversion paradox occurs. We then show that this paradox can occur when we use classical methods, e.g., Borda, Massey, Colley and Markov methods.
\end{abstract}


\section{Introduction.}

Ranking items following the results of a tournament is a classic and ancient problem. It has appeared in economics, politics as well as in sport, \cite{who1,Laslier}.
This problem has a long history and numerous ranking methods have been proposed, for example the Borda method, the Massey method, the Colley method, the Markov method, the Elo ranking, etc. For a given situation, i.e., at the end of a tournament, all these methods do not necessarily give the same ranking.

  Here, we will focus on a particular problem: the inversion paradox, or more precisely the paradox of inversion after elimination of the last ranked player.
To explain this paradox, let us consider a situation where different players compete in a sports tournament. Let us imagine that after a tournament we use a ranking method and obtain the following result: $P_1 \succ P_2 \succ \cdots\succ P_n$. This means that player $P_1$ is ranked first, $P_2$ second and $P_n$ last. Now let us imagine that player $P_n$ has been suspended for doping. In this case, we have to delete the results of the matches in which the player $P_n$ met another player and carry out a new ranking. On the face of it, since $P_n$ was ranked last, it should not  matter. However, for certain methods such as those of Borda, Massey, Colley, and Markov, it is possible that after deleting $P_n$ the new ranking becomes $P_{n-1} \succ P_{n-2} \succ \cdots \succ P_2 \succ P_1$. This ranking gives the reverse order between the remaining players! This is what we call the inversion paradox.


When we consider the Markov algorithm that gave rise to Google's PageRank method, the inversion paradox can be interpreted as follows: the players are the web pages and the matches between the players are replaced by the hyperlinks between the pages. After searching the web, we get a ranking of pages related to our search. Around ten pages are displayed on the screen. Some pages have been used for the ranking but are not displayed. Intuitively, we think that removing the last page in the ranking from the database does not change the ranking. However, the inversion paradox can appear when we use the Markov method. This means that if we delete the last page, then the order in which the results of our search are displayed can be reversed. In other words, deleting the last page (which is not even displayed on the screen) can completely change the way the result is displayed.
For example, let us imagine that a query produces a list of 100 classified pages, which we will write down as $P_1$, $P_2$, \ldots, $P_{100}$. Let us also suppose that 10 pages are displayed on the screen. The screen then shows pages $P_1$, $P_2$, \ldots, $P_{10}$.  If the inversion paradox occurs and the 100th page is deleted, we get on the screen pages $P_{99}$, $P_{98}$, \ldots , $P_{90}$, in this order. The user then gets a completely different result. He or she cannot even see that the order has been reversed, because he or she cannot see all the pages on the screen.\\

The inversion paradox was studied in the voting situation by Peter C. Fishburn 
in \cite{Fishburn} and then by Donald G. Saari, see for example \cite{Saari}. Here, in Theorem \ref{thm:inv}, we will give a sufficient condition 
for a ranking method  to give rise to an inversion
paradox. This theorem can be stated as an impossibility theorem. Indeed, our theorem says that a \emph{natural} ranking method cannot satisfy \emph{the long tournament property}, \emph{reducibility by Condorcet tournaments},  and avoid the inversion paradox, i.e., there are tournaments where the inversion paradox occurs. (Emphasized terms will be defined in the following).

 Impossibility results for tournaments have already been proved, see for example \cite{Altman_Tennenholtze,Csato_impossibility_paired,Csato_impossibilities_generalized_tournaments}. Furthermore, the impact of eliminating an extremely strong or extremely  weak candidate during a championship has already been studied by Aleksei Y. Kondratev et al. in \cite{Kondratev}. They consider generalized scoring rules and they illustrate their problem with the case of the 2014/2015 Biathlon World Cup: the removal of an athlete for doping in 2019 changed the final ranking four years after the competition!
Here, we are considering a more general situation. In our situation, each athlete does not necessarily take part in exactly the same number of races, competitions or matches. In addition, we study the situation in which we remove the last player from the final ranking. Kondratev et al. work on the situation where the eliminated player has lost all his or her  matches.

 As a corollary of Theorem~\ref{thm:inv}, we  obtained  that there are  tournaments where the methods of Borda, Massey, and 
Colley  give an inversion paradox.
We will also give examples for which the Markov method also causes 
such a paradox. All these examples are constructed on the same tournament which comes from the  proof of the theorem.


\subsection{Notations and definitions.}
In this article, we study tournaments and ranking methods. In the following, we give  formal definitions of these notions.

\begin{definition}
Consider a set of $n$ players, $P_1$, $P_2$, \ldots, $P_n$.\\
A match is a quadruple $(P_i,P_j,s_i,s_j)$, where $P_i$ and $P_j$ are the names of the two players and $s_i$, $s_j$ are the score obtained by $P_i$ and $P_j$ during this match.\\
A tournament is a finite list of matches.
\end{definition}

As done in \cite{chartier_etall}, we define a perfect tournament as a tournament where every player plays every other player once and there are no upsets. For example, if we have $n$ players $P_1$, \ldots, $P_n$ then an example of a perfect tournament is given by $P_i$ beats $P_j$ if and only if $i<j$. In other words:
 $P_1$ beats all the other players, $P_2$ is defeated by $P_1$ but beats all the other players. $P_3$ is defeated by $P_1$ and $P_2$ but beats all the other players, and so on. Furthermore, we suppose that each match ends with the score $1-0$.
More generally, we have the following definition.

\begin{definition}
We say that a tournament is a perfect tournament if each player meets all the other players just once and  there exists $\sigma$, a permutation of $\{1, \ldots, n\}$, such that
 each match has the following form
$$(P_{\sigma(i)},P_{\sigma(j)},1,0), \textrm{ where } i<j.$$
 We denote this perfect tournament by $\mathcal{T}_{P_{\sigma(1)} \succ P_{\sigma(2)} \succ \cdots \succ P_{\sigma(n)}}$.\\
\end{definition} 
 
In the perfect tournament $\mathcal{T}_{P_{\sigma(1)} \succ P_{\sigma(2)} \succ \cdots \succ P_{\sigma(n)}}$, the player $P_{\sigma(1)}$ beats all the others players, $P_{\sigma(2)}$ is defeated by $P_{\sigma(1)}$ but beats $P_{\sigma(i)}$ for all $i>2$, etc.\\

If we have two tournaments $\mathcal{T}_1$ and $\mathcal{T}_2$ then the union of these tournaments consists of all the results obtained in $\mathcal{T}_1$ and $\mathcal{T}_2$. We denote this tournament by $\mathcal{T}_1 + \mathcal{T}_2$.
This notation is useful when we consider the union of $k$ identical tournaments. More precisely, if we have a tournament $\mathcal{T}_1$, a tournament $\mathcal{T}_2$, and the results in these two tournaments are exactly the same, then we can write $\mathcal{T}_1=\mathcal{T}_2=\mathcal{T}$. Then, the union of these tournaments is denoted by $\mathcal{T}_1+\mathcal{T}_2=\mathcal{T}+\mathcal{T}=2\mathcal{T}$.\\

Some tournaments have a special structure. For example, in the tournament
$$\mathcal{T}_{P_1\succ P_2 \succ P_3} + \mathcal{T}_{P_2 \succ P_3 \succ P_1} + \mathcal{T}_{P_3\succ P_1 \succ P_2}$$
\begin{itemize}
\item[-]$P_1$ wins two matches against $P_2$ and loses two matches against $P_3$,
\item[-]$P_2$ wins two matches against $P_3$ and loses two matches against $P_1$,
\item[-]$P_3$ wins two matches against $P_1$ and loses two matches against $P_2$.
\end{itemize}
This previous tournament is an example of what we call a Condorcet tournament.

\begin{definition}
Given $n$ players and a permutation $\sigma$, a Condorcet tournament is a tournament with the following form
$$\mathcal{C}_{P_{\sigma(1)}, P_{\sigma(2)}, \ldots, P_{\sigma(n)}}
:= \sum_{i=0}^{n-1} \mathcal{T}_{P_{\sigma(1+i [n])}\succ P_{\sigma(2+i [n])} \succ \cdots \succ P_{\sigma(n+i [n])}}$$
where $k+i[n]$ is the residue modulo $n$ of $k+i$ in the interval $[1,n]$.
\end{definition}

For example, if $n=4$, $\sigma(1)=2$, $\sigma(2)=1$, $\sigma(3)=4$, and $\sigma(4)=3$, then 
\begin{eqnarray*}
\mathcal{C}_{P_{\sigma(1)},P_{\sigma(2)},P_{\sigma(3)}, P_{\sigma(4)}}&=&\mathcal{C}_{P_2,P_1,P_4,P_3}\\
&=& \mathcal{T}_{P_2\succ P_1 \succ P_4 \succ P_3} + \mathcal{T}_{P_1 \succ P_4 \succ P_3 \succ P_2} + \mathcal{T}_{P_4\succ P_3 \succ P_2 \succ P_1}\\
&& + \mathcal{T}_{P_3\succ P_2 \succ P_1 \succ P_4}.
\end{eqnarray*}

In this article we study different ranking methods. 

\begin{definition}
A ranking is  a weak ordering on  the set of players.\\
A ranking method is a function $f$ which takes as input 
a tournament $\mathcal{T}$ and outputs $f(\mathcal{T})$ 
the ranking between the different players.
\end{definition}
In the following, if player $P_2$ is ranked first, $P_3$ second, and $P_1$ third, we will denote this situation by $P_2 \succ P_3 \succ P_1$. There exist different ranking methods, see e.g., \cite{who1,Stefani}. In this article, we study in particular the Borda, Massey, Colley, and Markov methods.


\subsection{Some classical methods.}
In this section we recall briefly how the Borda, Massey, 
Colley and Markov methods work. The idea behind these methods is to associate to each 
player a rating and then deduce a ranking from this 
rating. Of course, the higher the rating, the better the rank.

\subsection{The Borda method.}
A simple way of associating a ranking with a player is to count the number of victories. The player with the most wins is ranked first, the player with the second most wins is ranked second, etc.

In this article, we call this method the Borda method. Indeed, in 1781, Jean-Charles de Borda has suggested a voting system that is referred to as the Borda count, see~\cite{Borda}. The Borda method works as follows. Each voter writes on a ballot the order of his or her preferences for the candidates, e.g., $P_1\succ P_2 \succ \cdots \succ P_n$ means that $P_1$ is preferred to $P_2$, which in turn is preferred to $P_3$, and so on. A candidate gains as many points as the number of candidates behind him. So, in our example, $P_1$ wins $n-1$ points, $P_2$ wins $n-2$ points, ..., $P_{n-1}$ wins 1 point, and $P_n$ wins 0 point. Each ballot paper therefore awards points to each candidate. The points obtained by each candidate are then added together, and the candidate with the most points wins the election.

In the case of a perfect tournament $P_1 \succ P_2 \succ \cdots \succ P_n$, $P_1$ has $n-1$ victories, $P_2$ has $n-2$ victories, \ldots, $P_1$ has 1 victory and $P_n$ has 0 victories. Thus, counting the number of victories coincides with Borda's method. 

\begin{example}
We  consider a tournament where the result of each match is $1-0$ or $0-1$.  The results of this tournament are summarised in the following matrix $B$ whose entry $B_{ij}$ gives the number of wins of $P_i$ against $P_j$
$$B=\begin{pmatrix}
0&1&1&5\\
1&0&1&2\\
2&2&0&1\\
2&2&2&0
\end{pmatrix}.
$$
For example, this matrix means that $P_1$ wins five times and loses twice against $P_4$.
In this situation, $P_1$ has 7 victories, $P_2$ has 4 victories, $P_3$ has 5 victories, and $P_4$ has 6 victories. Thus the Borda method gives the ranking: $P_1 \succ P_4 \succ P_3 \succ P_2$.
\end{example}

\subsection{The Massey method.}
In 1997, Kenneth Massey proposed a method for 
ranking college football teams \cite{Massey}. This method  has been a part of the Bowl Championship Series since the 1999 season. 
Massey's method is based on the following idea. 
We would like to have for each match between $P_i$ and $P_j$
$$r_i-r_j=s_i-s_j,$$
where $r_i$ and $r_j$ are the ratings  of the two players and $s_i$ and $s_j$ are 
their scores.
All the matches in the tournament give a linear system of equations $X\r= \textbf{s}$, where the $i$-th component of $\r$ is $r_i$ 
and $X$ is a matrix whose coefficients belong to $\{-1;0;1\}$.
As this system is overdetermined, it does not necessarily have a solution. Thus we compute the least squares solution of this system. 
This means that we consider what is called the Massey matrix $M=X^TX$ 
and we solve $$M\r=\p$$ where $\p=X^T\textbf{s}$ is the cumulative point differential vector. Thus, the entries $M_{ij}$ of $M$  are given by
$$M_{ij}= \begin{cases}
\textrm{ $-$(number of matches played by } P_i \textrm{ against } P_j), \textrm{ if } i \neq j,\\
\textrm{ total number of matches played by  } P_i, \quad \quad  \quad  ~~~ \textrm{if } i=j.
\end{cases}$$

\noindent The entries $p_i$ of the cumulative differential vector $\p$ are defined by
$$p_i= f_i-a_i,$$
where $f_i$ is the total points scored by $P_i$  during the tournament and $a_i$ is the total points scored against $P_i$  during the tournament. 

\noindent The Massey matrix associated to a perfect tournament with $n$ players is the following  $n \times n$ matrix:
$$\begin{pmatrix}
n-1 & -1 & -1 & \ldots & -1\\
-1 & n-1 &-1 & \ldots & -1\\
-1&-1 &n-1 &\ldots &-1\\
\vdots & \vdots & & \ddots &\\
-1 &-1 &-1 & \ldots & n-1
\end{pmatrix}.
$$


\noindent The cumulative point differential vector $\p$ associated to the perfect tournament  $\mathcal{T}_{P_1 \succ \cdots \succ P_n}$ is given by $\p=\big(n-1, n-3, n-5, \ldots, -(n-3),-(n-1)\big)^T$.\\


The Massey matrix $M$ is singular. Indeed, the sum of the rows is zero.
Therefore, the system $M\r=\p$ has an infinite number of solutions. 
All these solutions give the same ranking between the players. 
We then consider the unique solution $\r$ such that $\sum_{i=1}^nr_i=0$. In order to get this unique solution $\r$, we consider the adjusted Massey matrix $\overline{M}$ obtained by replacing the last row of $M$ with a row of all $1's$. This matrix is invertible and we get the desired rating vector $\r$ by solving the linear system
$$\overline{M}\r=\overline{\p}$$
where $\overline{\p}$ is the vector obtained by replacing the last element of $\p$ with a $0$.


\begin{example}
We consider the same tournament as the one used to illustrate the Borda method. In this situation, the adjusted Massey matrix $\overline{M}$ and the vectors $\overline{\p}$ and $\r$  are:
$$\overline{M}=
\begin{pmatrix}
12&-2&-3&-7\\
-2&9&-3&-4\\
-3&-3&9&-3\\
1&1&1&1
\end{pmatrix},
\quad
\overline{\p}=\begin{pmatrix}
2\\-1\\1\\0
\end{pmatrix}, \quad 
\r=\begin{pmatrix}
0.1142\ldots\\ -0.1009\ldots \\0.0833\ldots \\-0.0966 \ldots\\
\end{pmatrix}.\\
$$
Thus the Massey method gives the ranking $P_1 \succ P_3 \succ P_4 \succ P_2$. 
\end{example}

\subsection{The Colley method.}
In 2001, Wesley Colley wrote a paper about his new method for ranking sports teams \cite{Colley}. This is one of the methods recognized by the NCAA in its list of national champion selectors in college football. The main idea of this method is to compute the winning percentage of each player thanks to Laplace's rule of succession. Colley's method can also be succinctly summarized with a linear system.\\
Here, the rating vector $\r$ is obtained as the solution of the linear system
$$C\r=\b.$$
The matrix $C$ is equal to $M+2I$, where $M$ is the Massey matrix and $I$ is the identity. Thus, the entries $C_{ij}$ of $C$ are given by
$$C_{ij}= \begin{cases}
\textrm{ $-$(number of matches played by } P_i \textrm{ against } P_j), \textrm{ if } i \neq j,\\
2+\textrm{ total number of matches played by  } P_i, \quad \quad \textrm{if } i=j.
\end{cases}$$
The matrix $C$ is invertible.\\

The $i$-th coordinate of the right-hand side vector $\b$ is defined by 
$$b_i=1+\dfrac{1}{2}(w_i-l_i),$$
where $w_i$ (resp. $l_i$) is the total number of wins (resp. losses) accumulated by $P_i$. Therefore, Colley method's does not take into account match scores.

\begin{example}
We consider again the same tournament as the one used to illustrate the Borda method. In this situation, the Colley matrix $C$ and the vectors $\b$ and $\r$  are:
$$C=
\begin{pmatrix}
14&-2&-3&-7\\
-2&11&-3&-4\\
-3&-3&11&-3\\
-7&-4&-3&16
\end{pmatrix},
\quad
\b=\begin{pmatrix}
2\\1/2\\3/2\\0
\end{pmatrix}, \quad 
\r=\begin{pmatrix}
0.5509\ldots\\0.4574\ldots\\0.5357\ldots\\0.4558\ldots
\end{pmatrix}.\\
$$
Thus the Colley method gives the ranking $P_1 \succ P_3 \succ P_2 \succ P_4$.
\end{example}

\subsection{The Markov method.}
This method has been  studied extensively for ranking objects and has a long history \cite{sinn2022landau}. This method corresponds to Google's PageRank algorithm. We recall briefly here how it works in the context of a tournament.

Let us imagine that a fan changes his favorite player according to the results of matches. Initially, this fan
supports one player, for example $P_1$. Then, the supporter looks at the results of player $P_1$ and chooses to support
a player who has beaten $P_1$. The choice is made at random and the probability of choosing a player depends on the number of matches that player has won.
More precisely, the probability of choosing a player is proportional to the number of times that player has beaten $P_1$. The supporter now backs a new player, say $P_2$.
The process is repeated and if a player is undefeated then the supporter chooses a new player with equiprobability. If we continue this process, then counting the proportion of time the fan supports a player gives us an indication of the value of that player. We will now look at how to put this idea into practice.

We consider an $n \times n$ matrix $G$ whose entries are $G_{ji}=l_{ij}/n_i$, where $l_{ij}$ is the number of losses of $P_i$ against $P_j$ and $n_i$ is the total number of losses of $P_i$.
If the player $P_i$ is undefeated then all the coefficients of the $i$-th column are equal to $1/n$. This provides a stochastic matrix.\\

In the webpages context, we consider a directed graph where the vertices are webpages and the edges are hyperlinks from a page $P_i$ to a page $P_j$. Here, the vertices of the graph are the players and there is an edge from $P_i$ to $P_j$ if $P_i$ is defeated by $P_j$.
Then a matrix of the form $\alpha G + \frac{1-\alpha}{n} N_n$ is computed, where $N_n$ is the $n \times n$ matrix with all coefficients equal to $1$ and $\alpha$ is a parameter usually equal to $0.85$, see \cite{who1} and references therein. An eigenvector (with positive coordinates) $\r$  associated to the eigenvalue 1  of $R$ is computed. The coordinates $r_i$ of $\r$ give the rating of $P_i$ and then the  ranking between the players is obtained.\\

The PageRank algorithm has been studied in several articles. For example in \cite{Newlink}, the authors have worked on the impact of a new link and in \cite{perturbation_hyper_link} the stability of this algorithm is analyzed. However, the inversion paradox has not been studied.


\begin{example}
Using again the same tournament as that used to illustrate the Borda method, the matrix obtained when using the Markov method is as follows 
$$G=\begin{pmatrix}
0&1/5&1/4&5/8\\
1/5&0&1/4&2/8\\
2/5&2/5&0&1/8\\
2/5&2/5&2/4&0
\end{pmatrix}.
$$
Furthermore, an eigenvector associated to the eigenvalue $1$ of $\alpha G+\frac{1-\alpha}{4}N_4$ is
$$\r=\begin{pmatrix}
\frac{5(15 \alpha^3+85\alpha^2+172\alpha+160)}{16 \left(3 \alpha^{2}+20 \alpha +25\right) \left(\alpha +2\right)}
\\
-\frac{5 \left(3 \alpha^{3}-19 \alpha^{2}-112 \alpha -160\right)}{16 \left(3 \alpha^{2}+20 \alpha +25\right) \left(\alpha +2\right)}\\
\frac{\alpha +8}{4 \alpha +8}\\
1
\end{pmatrix}.$$
Furthermore, for all $\alpha \in [0,1]$, we get $r_4>r_1>r_3>r_2$. Thus the Markov method gives the ranking $P_4\succ P_1 \succ P_3 \succ P_2$.\\
\end{example}

\begin{remark}
In the previous examples, the tournament was the same but the four rankings are different:\\
- with the Borda method: $P_1 \succ P_4 \succ P_3 \succ P_2$.\\
- with the Massey method: $P_1 \succ P_3 \succ P_4 \succ P_2$. \\
- with the Colley method: $P_1 \succ P_3 \succ P_2 \succ P_4$.\\
- with the Markov method: $P_4\succ P_1 \succ P_3 \succ P_2$.
\end{remark}

\subsection{Other methods.}
There exists other methods for ranking players, see e.g., \cite{Stefani,BozzoVidoniFranceschet,RandomWalker,Dynamic_Bradley-Terry,ChartierKreutzerLangvillePedings,FranceschetBozzoVidoni,Chebotarev-Shamis,MinimumViolation,DevlinTreloar,Stob, Chebotarev,GonzDiaz-Hendrickx-Lohmann,Brink,Bertran,Ismail}. In this article, we have chosen to give a general theorem and then to show that, as a corollary, the inversion occurs with some classical methods. We have chosen to study the Borda, Massey, and Colley methods in particular because they are used in practice and, moreover, the proofs of the inversion paradox for these methods use only elementary linear algebra.

\section{Inversion with Condorcet tournaments.}\label{sec:2}
\subsection{A general result.}
When we have the perfect tournament $\mathcal{T}_{P_{\sigma(1)} \succ P_{\sigma(2)} \succ \cdots \succ P_{\sigma(n)}}$, only the following ranking is natural: $P_{\sigma(1)} \succ P_{\sigma(2)} \succ \cdots \succ P_{\sigma(n)}.$
This leads to the following definition.

\begin{definition}
We say that a ranking method $f$ is \emph{natural} if it satisfies
$$f(\mathcal{T}_{P_{\sigma(1)} \succ P_{\sigma(2)} \succ \cdots \succ P_{\sigma(n)}})=P_{\sigma(1)} \succ P_{\sigma(2)} \succ \cdots \succ P_{\sigma(n)}.$$
\end{definition}

By giving the same number of wins and losses to each player, a Condorcet tournament seems to give no information to rank the players. Thus, it is reasonable to assume that the presence or absence of a Condorcet tournament should not affect the ranking. This leads to the definition:

\begin{definition}
Let $\mathcal{T}$ be the union of different perfect tournaments and  $\mathcal{C}$ be a Condorcet tournament. We say that a ranking method $f$ can be \emph{reduced by Condorcet tournaments}, if for all $k \in \NN$, we have $$f(\mathcal{T}+ k\mathcal{C})=f(\mathcal{T}).$$
\end{definition}

This property has already been used to get strong results, see  for example  \cite{Bouyssou}, where it is called ``independence of circuits", or see \cite{Balinski_Laraki} where  the authors write ``the method cancels properly". \\

The following definition concerns tournaments of the form $\mathcal{T}_1+k\mathcal{T}_2$. When $k$ is sufficiently large and  $f(\mathcal{T}_2)$ is a strict total order, it seems reasonable to expect the ranking of players over this kind of long tournaments to be identical to the ranking $f(\mathcal{T}_2)$. This leads to the following definition.

\begin{definition}
Let $\mathcal{T}_1$ and $\mathcal{T}_2$ be the union 
of perfect tournaments with $f(\mathcal T_2)$ a strict total order.
We say that a ranking method $f$ satisfies 
\emph{the long tournament property}, 
if $$f(\mathcal{T}_1 + k \mathcal{T}_2)=f(\mathcal{T}_2),$$ 
when $k \in \NN$ is large enough.
\end{definition}
Now, we formally define what an inversion paradox is.

\begin{definition}
We say that  the \emph{inversion paradox} occurs 
with the ranking method $f$ if there exists a tournament $\mathcal T$ and a permutation $\sigma$
such that the following 
 conditions are satisfied:\\
1) $\mathcal{T}$ is a tournament with $n$ players $P_1$, \ldots, $P_n$ such that $$f(\mathcal{T})=P_{\sigma(1)} \succ P_{\sigma(2)} \succ \cdots \succ P_{\sigma(n)}.$$
2) If $\mathcal{T}'$ is the tournament where we consider 
all the results in $\mathcal{T}$ except the ones where $P_{\sigma(n)}$ appears (in other words, $\mathcal{T}'$ is the tournament $\mathcal{T}$ 
where player $P_{\sigma(n)}$ is deleted) then:
$$f(\mathcal{T}')=P_{\sigma(n-1)} \succ P_{\sigma(n-2)} \succ \cdots \succ P_{\sigma(2)} \succ P_{\sigma(1)}.$$
\end{definition}

\begin{remark}\label{rem:prop_prime}
We have introduced the notation $\mathcal{T}'$. We can remark that we have the following straightforward properties:
\begin{enumerate}
\item If $ \mathcal{T}_1$ and $\mathcal{T}_2$ are two tournaments and $\mathcal{T}'_1$, $\mathcal{T}'_2$ are the corresponding tournaments where the same player is deleted, then   $(\mathcal{T}_1+\mathcal{T}_2)'=\mathcal{T}_1'+\mathcal{T}_2'$.
\item If the player $P_n$ is deleted then $(\mathcal{T}_{P_1 \succ P_2 \succ \cdots \succ P_n})'= \mathcal{T}_{P_1 \succ P_2 \succ \cdots \succ P_{n-1}}$.
\end{enumerate}
\end{remark}

Furthermore, we have the following lemma.
\begin{lemma}\label{lem:condorcet_prime}
With the previous notations, if $P_n$ is deleted, we have 
$$(\mathcal{C}_{P_n \succ P_{n-1} \succ \cdots \succ P_1}\big)'= \mathcal{T}_{P_{n-1} \succ \cdots \succ P_1} + \mathcal{C}_{P_{n-1} \succ \cdots \succ P_1}.$$
\end{lemma}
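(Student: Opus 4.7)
The plan is to decompose $\mathcal{C}_{P_n \succ P_{n-1} \succ \cdots \succ P_1}$ as a sum of $n$ perfect tournaments (via the definition), apply Remark~\ref{rem:prop_prime}(1) to push the prime operation through the sum, and then identify each resulting piece with one of the summands on the right-hand side.

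First I would write
$$\mathcal{C}_{P_n \succ P_{n-1} \succ \cdots \succ P_1} = \sum_{i=0}^{n-1} \mathcal{T}_i$$
where $\mathcal{T}_i$ is the $i$-th cyclic shift of $\mathcal{T}_{P_n \succ P_{n-1} \succ \cdots \succ P_1}$. For $i = 0$, the player $P_n$ is at the top, so Remark~\ref{rem:prop_prime}(2) (applied to the relabelled tournament) gives $\mathcal{T}_0' = \mathcal{T}_{P_{n-1} \succ P_{n-2} \succ \cdots \succ P_1}$, which is precisely the first summand on the right-hand side.

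For each $i \in \{1, \ldots, n-1\}$, a direct reading of the definition of the cyclic shift shows that $P_n$ occupies position $n-i+1$ and
$$\mathcal{T}_i = \mathcal{T}_{P_{n-i} \succ P_{n-i-1} \succ \cdots \succ P_1 \succ P_n \succ P_{n-1} \succ \cdots \succ P_{n-i+1}}.$$
Deleting the interior player $P_n$ from this perfect tournament (the natural extension of Remark~\ref{rem:prop_prime}(2) to a player in any position: one only removes the matches involving $P_n$, leaving the perfect tournament on the remaining players in the same order) yields
$$\mathcal{T}_i' = \mathcal{T}_{P_{n-i} \succ P_{n-i-1} \succ \cdots \succ P_1 \succ P_{n-1} \succ \cdots \succ P_{n-i+1}}.$$

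The main (and essentially only) obstacle is the bookkeeping needed to recognise the $n-1$ tournaments $\mathcal{T}_1', \ldots, \mathcal{T}_{n-1}'$ as the full list of cyclic shifts underlying $\mathcal{C}_{P_{n-1} \succ P_{n-2} \succ \cdots \succ P_1}$. This comes down to a reindexing: setting $j = i-1$ for $i \in \{1, \ldots, n-1\}$, the tournament $\mathcal{T}_{i}'$ is exactly the $j$-th cyclic shift of $\mathcal{T}_{P_{n-1} \succ P_{n-2} \succ \cdots \succ P_1}$ on the player set $\{P_1, \ldots, P_{n-1}\}$. Summing over $j = 0, \ldots, n-2$ reconstructs $\mathcal{C}_{P_{n-1} \succ P_{n-2} \succ \cdots \succ P_1}$, and combining with the contribution from $i=0$ and with Remark~\ref{rem:prop_prime}(1) completes the proof.
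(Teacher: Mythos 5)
Your proof is correct and follows essentially the same route as the paper: both decompose the Condorcet tournament into its $n$ constituent cyclic shifts, delete $P_n$ from each summand (using the evident fact that removing a player from a perfect tournament leaves the perfect tournament on the remaining players in the induced order), and observe that the $i=0$ and $i=1$ shifts both collapse to $\mathcal{T}_{P_{n-1}\succ\cdots\succ P_1}$ while the shifts $i=1,\ldots,n-1$ reassemble into $\mathcal{C}_{P_{n-1}\succ\cdots\succ P_1}$. Your version just makes the reindexing $j=i-1$ explicit where the paper uses ellipses.
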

\begin{proof}
We have
\begin{eqnarray*}
\mathcal{C}_{P_n \succ P_{n-1} \succ \cdots \succ P_1}&=&\mathcal{T}_{P_n \succ P_{n-1} \succ \cdots \succ P_1} + \mathcal{T}_{P_{n-1} \succ P_{n-2} \succ \cdots \succ P_1 \succ P_n} \\
&&+ \mathcal{T}_{P_{n-2} \succ P_{n-3} \succ \cdots \succ P_1 \succ P_n \succ P_{n-1}} + \cdots +\mathcal{T}_{P_1 \succ P_{n} \succ \cdots \succ P_2}.
\end{eqnarray*}
Thus if $P_n$ is deleted we get
\begin{eqnarray*}
\big(\mathcal{C}_{P_n \succ P_{n-1} \succ \cdots \succ P_1}\big)'&=&\mathcal{T}_{P_{n-1} \succ \cdots \succ P_1} + \mathcal{T}_{P_{n-1} \succ \cdots \succ P_1} \\
&&+ \mathcal{T}_{P_{n-2} \succ P_{n-3} \succ \cdots \succ P_1 \succ P_{n-1}} + \cdots +\mathcal{T}_{P_{1} \succ P_{n-1} \succ \cdots \succ P_2}\\
&=& \mathcal{T}_{P_{n-1} \succ \cdots \succ P_1} + \mathcal{C}_{P_{n-1} \succ \cdots \succ P_1}.
\end{eqnarray*}
\end{proof}

Our main result states that a ranking method satisfying the previous definitions  gives an inversion paradox.

\begin{theorem}\label{thm:inv}
If a natural ranking method can be reduced by  Condorcet tournaments and satisfies the long tournament property then it suffers from the inversion paradox.
\end{theorem}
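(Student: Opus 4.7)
The plan is to exhibit, for each $n$, a single tournament on $n$ players that witnesses the inversion paradox, built from a perfect tournament in the ``correct'' order plus many copies of a Condorcet tournament in the \emph{reverse} order. Concretely, I would fix the permutation $\sigma=\mathrm{id}$, pick a large integer $k$ (to be chosen via the long tournament property), and consider
$$\mathcal{T} \;=\; \mathcal{T}_{P_1\succ P_2\succ\cdots\succ P_n} \;+\; k\,\mathcal{C}_{P_n\succ P_{n-1}\succ\cdots\succ P_1}.$$
The two computations to run are $f(\mathcal{T})$ and $f(\mathcal{T}')$, where $\mathcal{T}'$ is obtained from $\mathcal{T}$ by deleting $P_n$.

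For $f(\mathcal{T})$, reducibility by Condorcet tournaments removes the $k\,\mathcal{C}_{P_n\succ\cdots\succ P_1}$ summand, leaving $f(\mathcal{T}) = f(\mathcal{T}_{P_1\succ\cdots\succ P_n})$, and then naturalness gives $f(\mathcal{T})=P_1\succ P_2\succ\cdots\succ P_n$. This works for every $k$, so the ranking on $\mathcal{T}$ is ``as expected.''

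For $f(\mathcal{T}')$ I would first use Remark~\ref{rem:prop_prime} to distribute the prime over the sum and Lemma~\ref{lem:condorcet_prime} to rewrite the Condorcet piece, obtaining
$$\mathcal{T}' \;=\; \mathcal{T}_{P_1\succ\cdots\succ P_{n-1}} \;+\; k\,\mathcal{T}_{P_{n-1}\succ\cdots\succ P_1} \;+\; k\,\mathcal{C}_{P_{n-1}\succ\cdots\succ P_1}.$$
Now reducibility by Condorcet tournaments applies (the first two summands are a union of perfect tournaments, so the hypothesis is met) and kills the final $k\,\mathcal{C}_{P_{n-1}\succ\cdots\succ P_1}$ summand. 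What remains is the tournament $\mathcal{T}_{P_1\succ\cdots\succ P_{n-1}} + k\,\mathcal{T}_{P_{n-1}\succ\cdots\succ P_1}$. By naturalness, $f(\mathcal{T}_{P_{n-1}\succ\cdots\succ P_1})=P_{n-1}\succ\cdots\succ P_1$ is a strict total order, so the long tournament property lets me choose $k$ large enough that
$$f\bigl(\mathcal{T}_{P_1\succ\cdots\succ P_{n-1}}+k\,\mathcal{T}_{P_{n-1}\succ\cdots\succ P_1}\bigr) \;=\; f(\mathcal{T}_{P_{n-1}\succ\cdots\succ P_1}) \;=\; P_{n-1}\succ\cdots\succ P_1,$$
which is exactly the reversed ranking required.

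I do not expect a serious obstacle: the entire argument is a bookkeeping exercise combining the three axioms in sequence. The only step that needs a little care is verifying that Lemma~\ref{lem:condorcet_prime} and Remark~\ref{rem:prop_prime} together really do put $\mathcal{T}'$ into the form \emph{(union of perfect tournaments)} $+$ $k\cdot$\emph{(Condorcet)}, which is precisely the shape needed to apply reducibility a second time. Once that algebraic identity is in place, fixing $k$ large enough via the long tournament property concludes the proof, so any $n\geq 3$ works and no further construction is required.
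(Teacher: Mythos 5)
Your proposal is correct and coincides with the paper's own proof: the same tournament $\mathcal{T}_{P_1\succ\cdots\succ P_n}+k\,\mathcal{C}_{P_n\succ\cdots\succ P_1}$, the same use of Remark~\ref{rem:prop_prime} and Lemma~\ref{lem:condorcet_prime} to rewrite the deleted tournament, and the same sequence of applications of reducibility, the long tournament property, and naturalness. No differences to report.
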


\begin{proof}
Let $f$ be a ranking method which satisfies the hypotheses of the theorem. As $f$ is a natural ranking method we have 
$$f(\mathcal{T}_{P_1 \succ P_2 \succ \cdots \succ P_n})=P_1 \succ P_2 \succ \cdots \succ P_n.$$
We consider $\mathcal{T}_k=\mathcal{T}_{P_1 \succ P_2 \succ \cdots \succ P_n}+k \mathcal{C}_{P_n \succ P_{n-1} \succ \cdots \succ P_1}$.
As $f$ can be reduced by Condorcet tournaments, we have for all $k \in \NN$
$$f(\mathcal{T}_k)=P_1 \succ P_2 \succ \cdots \succ P_n.$$

Now, we consider the tournament $\mathcal{T}_k'$  where the player $P_n$ is deleted.  Thanks to Remark~\ref{rem:prop_prime} and Lemma~\ref{lem:condorcet_prime}, we have
\begin{eqnarray*}
\mathcal{T}_k'&=& \big(\mathcal{T}_{P_1 \succ P_2 \succ \cdots \succ P_n}+k \mathcal{C}_{P_n \succ P_{n-1} \succ \cdots \succ P_1}\big)'\\
&=& \big(\mathcal{T}_{P_1 \succ P_2 \succ \cdots \succ P_n}\big)'+k \big(\mathcal{C}_{P_n \succ P_{n-1} \succ \cdots \succ P_1}\big)'\\
&=&\mathcal{T}_{P_1 \succ P_2 \succ \cdots \succ P_{n-1}}+ k\big( \mathcal{T}_{P_{n-1} \succ \cdots \succ P_1} + \mathcal{C}_ {P_{n-1} \succ \cdots \succ P_1}\big)\\
&=& \mathcal{T}_{P_1 \succ P_2 \succ \cdots \succ P_{n-1}}+k \mathcal{T}_ {P_{n-1} \succ \cdots \succ P_1} +k \mathcal{C}_ {P_{n-1} \succ \cdots \succ P_1}.
\end{eqnarray*}
Thus, as $f$ can be reduced by Condorcet tournaments we have
$$f(\mathcal{T}_k')=f(\mathcal{T}_{P_1 \succ P_2 \succ \cdots \succ P_{n-1}}+k \mathcal{T}_ {P_{n-1} \succ \cdots \succ P_1}).$$
Furthermore, as $f$ satisfies the long tournament property, there exists a sufficiently large $k$ such that 
$$f(\mathcal{T}_k')=f(\mathcal{T}_ {P_{n-1} \succ \cdots \succ P_1}).$$\\
At last, as $f$ is a natural ranking method we get
$$f(\mathcal{T}_k')=P_{n-1} \succ \cdots \succ P_1.$$
Thus the inversion paradox occurs.
\end{proof}

\begin{remark}
As mentionned in the introduction, the previous theorem can be stated as an impossibility theorem. Indeed, we have proved that a natural ranking method cannot simultaneously satisfy the long tournament property, reducibility by Condorcet tournaments and avoid the inversion paradox.
\end{remark}

In the following we will show that the methods of Borda, Massey, and Colley verify the hypotheses of the theorem. Consequently, the inversion paradox  can appear and we will illustrate it with the same tournament $\mathcal{X}=\mathcal{T}_{P_1 \succ P_2 \succ \cdots \succ P_5} + 2 \mathcal{C}_{P_5 \succ P_{4} \succ \cdots \succ P_1}$.


\subsection{Application to the Borda method.}
We are going to prove the following result:
\begin{corollary}
The inversion paradox occurs with the Borda method.
\end{corollary}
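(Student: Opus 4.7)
The plan is to verify that the Borda method satisfies the three hypotheses of Theorem~\ref{thm:inv}: naturality, reducibility by Condorcet tournaments, and the long tournament property. The corollary then follows immediately from the theorem. The key observation underlying all three verifications is that the Borda score (the number of wins) is additive with respect to the union of tournaments, so everything reduces to counting wins.

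First, for naturality, in the perfect tournament $\mathcal{T}_{P_{\sigma(1)} \succ \cdots \succ P_{\sigma(n)}}$ the player $P_{\sigma(i)}$ beats exactly the players $P_{\sigma(j)}$ with $j>i$, hence has $n-i$ wins. These win counts are strictly decreasing in $i$, so the Borda ranking is precisely $P_{\sigma(1)} \succ \cdots \succ P_{\sigma(n)}$.

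Second, for reducibility by Condorcet tournaments, I would show that in $\mathcal{C}_{P_{\sigma(1)},\ldots,P_{\sigma(n)}}$ each player accumulates the same number of wins. Indeed, the Condorcet tournament is by definition a sum of $n$ perfect tournaments obtained by cyclic shifts, and in these shifts each player $P_{\sigma(i)}$ occupies each of the positions $1,2,\ldots,n$ exactly once, contributing $n-1, n-2, \ldots, 0$ wins respectively. Summing gives $n(n-1)/2$ wins per player. Therefore, adding $k$ copies of a Condorcet tournament to any $\mathcal{T}$ increases every Borda score by the same amount $k\,n(n-1)/2$, and the ranking is preserved: $f(\mathcal{T}+k\mathcal{C})=f(\mathcal{T})$.

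Third, for the long tournament property, denote by $w_i(\mathcal{T})$ the number of wins of $P_i$ in a tournament $\mathcal{T}$. Additivity gives
\[
w_i(\mathcal{T}_1 + k\mathcal{T}_2) = w_i(\mathcal{T}_1) + k\, w_i(\mathcal{T}_2).
\]
Since $f(\mathcal{T}_2)$ is a strict total order, the values $w_i(\mathcal{T}_2)$ are pairwise distinct. For $k$ large enough, the term $k\,w_i(\mathcal{T}_2)$ dominates all fixed differences $w_i(\mathcal{T}_1)-w_j(\mathcal{T}_1)$, so the ranking of $\mathcal{T}_1+k\mathcal{T}_2$ matches that of $\mathcal{T}_2$. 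An explicit choice such as $k > \max_{i,j}|w_i(\mathcal{T}_1)-w_j(\mathcal{T}_1)|$ suffices.

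These three facts are elementary, so there is no real obstacle; the main care needed is the bookkeeping in the Condorcet step. Having verified the hypotheses, Theorem~\ref{thm:inv} applies and yields the inversion paradox for Borda. To illustrate, one can check directly on $\mathcal{X} = \mathcal{T}_{P_1 \succ \cdots \succ P_5} + 2\mathcal{C}_{P_5 \succ \cdots \succ P_1}$ that $f(\mathcal{X}) = P_1 \succ \cdots \succ P_5$ while $f(\mathcal{X}')=P_4 \succ P_3 \succ P_2 \succ P_1$, exhibiting the phenomenon explicitly on the tournament constructed in the proof of the theorem.
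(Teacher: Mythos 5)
Your proposal is correct and follows essentially the same route as the paper: it establishes that the Borda method is natural, reducible by Condorcet tournaments, and satisfies the long tournament property (exactly the content of Proposition~\ref{prop:borda}), and then invokes Theorem~\ref{thm:inv}. The individual verifications (win-counting for naturality, the $n(n-1)/2$ wins per player in a Condorcet tournament, and the additivity-plus-domination argument for large $k$) match the paper's arguments, down to the illustrative tournament $\mathcal{X}$.
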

In order to obtain this corollary, we just have to prove the following proposition. 

\begin{proposition}\label{prop:borda}The Borda method is i) natural, ii) can be reduced by  Condorcet tournaments, iii) satisfies the long tournament property.
\end{proposition}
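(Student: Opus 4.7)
The plan is to prove the three properties separately, using as the only tool the fact that the Borda rating of a player is just their total number of wins, and that this count is \emph{additive} over tournaments: for any decomposition $\mathcal{T}=\mathcal{T}_1+\mathcal{T}_2$, the number of wins of $P_i$ in $\mathcal{T}$ is the sum of their wins in $\mathcal{T}_1$ and $\mathcal{T}_2$.

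For (i), I would unpack the definition of the perfect tournament $\mathcal{T}_{P_{\sigma(1)}\succ\cdots\succ P_{\sigma(n)}}$: player $P_{\sigma(i)}$ meets each of the $n-i$ players below them and beats all of them, giving Borda count $n-i$. Since these values are strictly decreasing in $i$, the ranking produced is exactly $P_{\sigma(1)}\succ\cdots\succ P_{\sigma(n)}$, so Borda is natural.

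For (ii), the key computation is to evaluate the Borda count of an arbitrary player $P_{\sigma(j)}$ inside a single Condorcet tournament $\mathcal{C}_{P_{\sigma(1)},\ldots,P_{\sigma(n)}}$. By the definition of $\mathcal{C}$, $P_{\sigma(j)}$ appears exactly once in each possible position of the cyclic shifts $\mathcal{T}_{P_{\sigma(1+i[n])}\succ\cdots\succ P_{\sigma(n+i[n])}}$, $i=0,\ldots,n-1$, so by (i) they accumulate $0+1+\cdots+(n-1)=n(n-1)/2$ wins, a quantity independent of $j$. Hence one Condorcet tournament (and therefore, by additivity, $k$ of them) adds the same constant to every player's Borda count. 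Since constants do not affect weak orderings, $f(\mathcal{T}+k\mathcal{C})=f(\mathcal{T})$ for every $k\in\NN$.

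For (iii), suppose $f(\mathcal{T}_2)$ is a strict total order, so the Borda counts $B_i(\mathcal{T}_2)$ are pairwise distinct. By additivity, the Borda count of $P_i$ in $\mathcal{T}_1+k\mathcal{T}_2$ is $B_i(\mathcal{T}_1)+kB_i(\mathcal{T}_2)$. Taking $k$ larger than the ratio between $\max_{i,j}|B_i(\mathcal{T}_1)-B_j(\mathcal{T}_1)|$ and $\min_{i\neq j}|B_i(\mathcal{T}_2)-B_j(\mathcal{T}_2)|$, the contribution of $\mathcal{T}_1$ cannot flip any strict comparison coming from $\mathcal{T}_2$, so the ranking of $\mathcal{T}_1+k\mathcal{T}_2$ coincides with $f(\mathcal{T}_2)$. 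There is no real obstacle in this proof; the only point that warrants care is the symmetry calculation in (ii), which rests on the observation that each cyclic shift places each player in a distinct rank slot.
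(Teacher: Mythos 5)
Your proof is correct and follows essentially the same route as the paper: counting wins directly for (i), observing that a Condorcet tournament adds the constant $n(n-1)/2$ to every player's win total for (ii), and comparing $w_{1,i}+kw_{2,i}$ for large $k$ in (iii). The only differences are cosmetic — you spell out the cyclic-shift symmetry behind the constant in (ii) and give an explicit threshold for $k$ in (iii), where the paper argues pairwise.
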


\begin{proof}
\emph{i)} If we consider the perfect tournament $\mathcal{T}_{ P_{\sigma(1)} \succ \cdots \succ P_{\sigma(n)} }$, we have already noticed that in this situation $P_{\sigma(1)}$ has $n-1$ wins, $P_{\sigma(2)}$ has $n-2$ wins, etc. This gives the ranking: $P_{\sigma(1)} \succ \cdots \succ P_{\sigma(n)}$.\\
\emph{ii)} If we add a Condorcet tournament to a tournament with $n$ players,  each player gets  $n(n-1)/2$ wins. Thus, we add the same numbers of wins to each player and then the final ranking is not modified.\\
\emph{iii)} We denote by $w_{1,i}$ (respectively $w_{2,i}$)  the number of wins of player $P_i$ in the tournament $\mathcal{T}_1$ (respectively $\mathcal{T}_2$). We suppose that the Borda method applied to $\mathcal{T}_2$ gives a strict total ordering. Thus we suppose that $w_{2,\sigma(1)} > w_{2, \sigma(2)}> \cdots > w_{2,\sigma(n)}$. Therefore, we have 
\begin{small}$$ w_{1,\sigma(i)} + k w_{2,\sigma(i)} >  w_{1,\sigma(j)} + k w_{2,\sigma(j)}  \iff k(w_{2,\sigma(i)}-w_{2,\sigma(j)}) > w_{1,\sigma(j)}-w_{1,\sigma(i)}.$$
\end{small}
When $i<j$, we have  $w_{2,\sigma(i)}-w_{2,\sigma(j)}>0$, then the second inequality is satisfied if $k$ is big enough. Thus, the first inequality is satisfied when $k$ is big enough.
As $ w_{1,\sigma(i)} + k w_{2,\sigma(i)}$ corresponds to the number of wins of player $P_{\sigma(i)}$ in the tournament $\mathcal{T}_1+ k \mathcal{T}_2$, we deduce that the Borda method satisfies the long tournament property.
\end{proof}

\begin{example} 
We give an example of an inversion paradox 
 with the Borda method when we have $n=5$ players. Our construction follows the one given in the proof of Theorem~\ref{thm:inv}.
We set
$$\mathcal{X}= \mathcal{T}_{P_1 \succ P_2 \succ \cdots \succ P_5} + 2 \mathcal{C}_{P_5 \succ P_4 \succ \cdots \succ P_1}.$$
We denote by $w_i$ the number of wins of player $P_i$ in the tournament $\mathcal{X}$. We have $w_1=24$, $w_2=23$, $w_3=22$, $w_4=21$ and $w_5=20$. Thus, the Borda method applied to $\mathcal{X}$ gives the final ranking: $P_1 \succ P_2 \succ \cdots \succ P_5$.\\
Now, if $P_5$ is deleted, the  tournament $\mathcal{X}'$ is 
$$\mathcal{X}'=\mathcal{T}_{P_1 \succ P_2 \succ P_3 \succ P_4} +2 \mathcal{T}_{P_4 \succ P_3 \succ P_2 \succ P_1} +2 \mathcal{C}_{P_4 \succ P_3 \succ P_2 \succ P_1}.$$
We denote by $w'_i$ the number of wins of player $P_i$ in the tournament $\mathcal{X}'$. We have $w'_1=15$, $w'_2=16$, $w'_3=17$, $w'_4=18$. Thus, the Borda method applied to $\mathcal{X}'$ gives the final ranking: $P_4 \succ P_3 \succ P_2  \succ P_1$.\\

If we consider the tournament where we add just one Condorcet tournament $\mathcal{C}_{P_5 \succ P_4 \succ \cdots \succ P_1}$ to  the perfect tournament $\mathcal{T}_{P_1 \succ P_2 \succ \cdots \succ P_5}$, then $w'_1=w'_2=w'_3=w'_4=9$. Thus, in this situation, if $P_5$ is deleted then we get $P_1=P_2=P_3=P_4$. The order is modified but we do not get an inversion paradox.\\
\end{example}

\subsection{Application to the Massey method.}
In the following, we denote by $f_{M}$ 
the Massey  method and 
 we are going to prove:
\begin{corollary}
The inversion paradox occurs with the Massey method.
\end{corollary}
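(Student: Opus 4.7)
The plan is to follow the template of Proposition~\ref{prop:borda}: I will verify that $f_M$ is natural, reducible by Condorcet tournaments, and satisfies the long tournament property, and then invoke Theorem~\ref{thm:inv}. Everything will hinge on a single explicit formula for the Massey rating vector when the input is a union of perfect and Condorcet tournaments.

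First I would record three structural facts. (a) The Massey matrix $M$ and the differential vector $\p$ are additive under the union of tournaments. (b) Every perfect tournament on $n$ players, whatever the underlying permutation, has the same Massey matrix $M_0 = nI - J$, where $I$ is the identity and $J$ is the all-ones matrix, since each pair plays exactly once. (c) In the Condorcet tournament $\mathcal{C}$ each pair of players meets exactly $n$ times (once per cyclic shift) and each player accumulates $n(n-1)/2$ wins and $n(n-1)/2$ losses, so $M^{\mathcal{C}} = n M_0$ and $\p^{\mathcal{C}} = \mathbf{0}$.

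Because $\sum_i p_i = \sum_i (f_i - a_i) = 0$ in every tournament, $\p$ always lies in the hyperplane $V = \{\r : \sum_i r_i = 0\}$, on which $J$ acts as zero. Combining (a)--(c), for any union of $N$ perfect tournaments together with $k$ copies of $\mathcal{C}$ the Massey operator restricted to $V$ acts as the scalar $(N+kn)\,n$, so the unique Massey rating in $V$ is
$$\r = \frac{\p}{(N+kn)\,n}.$$
From this single formula the three properties fall out immediately. Naturality: $\p(\mathcal{T}_{P_1 \succ \cdots \succ P_n}) = (n-1,\,n-3,\,\ldots,\,-(n-1))^T$ is strictly decreasing. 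Reducibility: $\r(\mathcal{T} + k\mathcal{C})$ is a positive scalar multiple of $\r(\mathcal{T})$, so the two orderings coincide. Long tournament property: $\r(\mathcal{T}_1 + k\mathcal{T}_2)$ is proportional to $\p_1 + k\p_2$, and whenever $f_M(\mathcal{T}_2)$ is a strict total order the entries of $\p_2$ are pairwise distinct; for $k$ large enough, $p^1_{\sigma(i)} + kp^2_{\sigma(i)} > p^1_{\sigma(j)} + kp^2_{\sigma(j)}$ is equivalent to $p^2_{\sigma(i)} > p^2_{\sigma(j)}$, exactly as in the Borda argument.

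The main obstacle I expect is observation (c): I need to verify carefully that every pair indeed meets $n$ times in $\mathcal{C}$ and that every player has as many wins as losses there, so that $\p^{\mathcal{C}} = \mathbf{0}$ and $M^{\mathcal{C}} = n M_0$. This is elementary bookkeeping on the cyclic definition of $\mathcal{C}$ (each player passes through each of the $n$ positions of a perfect tournament exactly once), but it is the only place where the specific structure of the Condorcet tournament really enters.
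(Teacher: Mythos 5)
Your proposal is correct and follows essentially the same route as the paper: the paper also reduces everything to the identity $\r^M(\mathcal{T})=\frac{1}{nt}\,\p(\mathcal{T})$ for a union of $t$ perfect tournaments (its Corollary~\ref{cor-massey}), then checks the three hypotheses of Theorem~\ref{thm:inv} exactly as you do. The only difference is cosmetic: the paper obtains that formula by explicitly inverting the adjusted Massey matrix $\overline{M}_{t,n}$, whereas you observe that $M_0=nI-J$ acts as $n$ times the identity on the zero-sum hyperplane; both yield the same key fact that the rating vector is a positive multiple of the cumulative point differential vector.
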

In order to prove this corollary, as before we are going to show 
that the Massey method is natural, can be reduced by Condorcet tournaments 
and satisfies the long tournament property. Let us begin with some remarks and notations. 
The Massey matrix  $M_{t,n}$  associated 
with a tournament $\mathcal{T}$ made up of $t$  perfect
tournaments with $n$ players is the following:

$$M_{t,n}=
\left(
\begin{array}{ccccc}
t(n-1) & -t &  \ldots & -t & -t\\
-t & t(n-1) & \ldots & -t & -t\\
\vdots & \vdots &  \ddots &\vdots &\vdots \\
-t&-t &\ldots &t(n-1)  &-t\\
-t &-t &\ldots &-t &  t(n-1)
\end{array}
\right).
$$

\noindent The \textsl{cumulative point differential vector of $\mathcal{T}$}, denoted by $\p$  or by $\p(\mathcal{T})$ to be more precise, is given by
$$
~~~~~~~~~~~~~~
\p=\left(
\begin{array}{c}
p_1\\
p_2\\
\vdots\\
p_{n-1}\\
p_n
\end{array}
\right)
:=\left(
\begin{array}{c}
w_1-l_1\\
w_2-l_2\\
\vdots\\
w_{n-1}-l_{n-1}\\
w_n-l_n
\end{array}
\right)
$$
where $w_i$ (resp. $l_i$) is the total number of wins 
(resp. losses) accumulated by player
$P_i$. Indeed, as we consider perfect tournaments, the results of all matches are $1-0$ or \mbox{$0-1$} and $p_i=w_i-l_i$. \\
We  remark that $ \sum_{i=1}^np_i=0$. Thus, $p_n=-(p_1+\cdots+p_{n-1})$.

\noindent The ranking $f_{M}(\mathcal{T})$ is given by the vector $\r^M=(r_1 \cdots r_n)^T$
such that $\ovl{M}_{t,n}\: \r^M= \ovl{\p}$
 with
$$\ovl M_{t,n}=
\left(
\begin{array}{cccc|c}
t(n-1) & -t &  \ldots & -t & -t\\
-t & t(n-1) & \ldots & -t & -t\\
\vdots & \vdots &  \ddots &\vdots &\vdots \\
-t&-t &\ldots &t(n-1)  &-t\\ \hline
1&1 &\ldots &1 &  1
\end{array}
\right)
\:
~~~~~~~{\rm and}~~~~~~~
\ovl \p=\left(
\begin{array}{c}
p_1\\
p_2\\
\vdots\\
p_{n-1}\\
\hline
0
\end{array}
\right).
$$

Let us recall that the \textit{Massey rating vector}
$\r^{M}$ (or $\r^{M}(\mathcal{T})$ to be more precise)
 has been chosen in order to have $\disp \sum_{i=1}^n\r^{M}_i=0$.\\

\begin{proposition}\label{prop-massey}
With the previous notations, the inverse of $\ovl M_{t,n}$
is~: 
$$
\left( \ovl M_{t,n} \right)^{-1}=
\frac{1}{nt}
\left(
\begin{array}{ccc|c}
 &  &     & t\\
 &    I_{n-1} & &\vdots \\
& &  &t\\ \hline
-1 &\ldots &-1 &  t
\end{array}
\right)
$$
where $I_{n-1}$ is the $(n-1)\times (n-1)$-identity matrix.
\end{proposition}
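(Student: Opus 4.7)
The plan is to verify the identity by direct computation: set $B=\overline{M}_{t,n}$ and let $A$ denote the proposed inverse (including the scalar factor $1/(nt)$), then check that $BA=I_n$ entry by entry. The point that makes this quick is that $A$ is extremely sparse. Indeed, for each $j<n$, column $j$ of $A$ has only two nonzero entries, namely $A_{jj}=1/(nt)$ and $A_{nj}=-1/(nt)$; and column $n$ of $A$ is the constant column whose every entry equals $1/n$. Consequently every entry of $BA$ reduces to a sum of at most two products, and the computation splits into four cases according to whether $i<n$ or $i=n$, and whether $j<n$ or $j=n$.

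For $i<n$ and $j<n$, we have $(BA)_{ij}=B_{ij}A_{jj}+B_{in}A_{nj}$. Using that, for $i<n$, the diagonal entry is $B_{ii}=t(n-1)$ while all other entries of row $i$ equal $-t$, this gives $(n-1)/n+1/n=1$ when $i=j$ and $(-t)(1/(nt))+(-t)(-1/(nt))=-1/n+1/n=0$ otherwise. For $i<n$ and $j=n$ we get $(BA)_{in}=(1/n)\sum_{k=1}^n B_{ik}$, and the row sum $t(n-1)+(n-1)(-t)=0$ forces this to be $0$. For $i=n$ and $j<n$, since the last row of $B$ is constant equal to $1$, we obtain $(BA)_{nj}=1\cdot(1/(nt))+1\cdot(-1/(nt))=0$. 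Finally, $(BA)_{nn}=(1/n)\sum_{k=1}^n 1=1$. Thus $BA=I_n$, which proves the proposition.

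There is no real obstacle here: everything boils down to the elementary identity $(n-1)/n+1/n=1$ combined with the vanishing of the row sums of the first $n-1$ rows of $B$. One could alternatively recognize $\overline{M}_{t,n}$ as a rank-one perturbation of a matrix of the form $t(nI_n-J_n)$ (with the last row replaced) and invoke the Sherman--Morrison formula, but the direct verification above is shorter and makes transparent why the last column of the inverse has entries $1/n$ (corresponding to the normalization $\sum_i r_i=0$) while the other columns encode the expected difference pattern.
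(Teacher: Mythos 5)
Your verification is correct, and it follows exactly the route the paper takes: the paper's proof of this proposition is simply ``the proof comes from a direct computation,'' and you have carried out that computation, correctly exploiting the sparsity of the candidate inverse and the vanishing row sums of the first $n-1$ rows of $\ovl M_{t,n}$ to check $BA=I_n$ entrywise. Nothing to add, except perhaps a one-line remark that for square matrices the one-sided identity $BA=I_n$ already forces $A=B^{-1}$.
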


\begin{proof}
The proof comes from a direct computation.
\end{proof}

\begin{corollary}\label{cor-massey}
We consider tournaments with $n$ players.

\begin{enumerate}
\item[i.] If $\mathcal{T}$ is a  tournament made of $t$ perfect tournaments, then
\begin{equation}
\tag{1}
\r^M(\mathcal{T})=\frac{1}{nt}\:\p(\mathcal{T}). \label{EqM}
\end{equation}
\item[ii.] \label{item2} Let $\mathcal{T}$ and $\mathcal{T}'$ be two tournaments made of, 
$t$ and $t'$ perfect tournaments. 
If $\p(\mathcal{T})=\p(\mathcal{T}')$, then 
$\disp \r^{M}(\mathcal{T}')=\frac{t}{t'}\:\r^{M}(\mathcal{T})$ and $f_{M}(\mathcal{T})=f_{M}(\mathcal{T}')$.
\end{enumerate} 
\end{corollary}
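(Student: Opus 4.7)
The plan is to use Proposition \ref{prop-massey} as a black box and reduce both items to a direct matrix-vector computation. Since the Massey rating vector $\r^M(\mathcal{T})$ is by definition the unique solution of $\ovl M_{t,n}\,\r^M = \ovl \p$, the explicit formula for $(\ovl M_{t,n})^{-1}$ gives $\r^M = (\ovl M_{t,n})^{-1} \ovl \p$, and everything will follow by unpacking this product.

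For item \emph{i.}, I will multiply out $(\ovl M_{t,n})^{-1} \ovl \p$ using the block structure given in Proposition \ref{prop-massey}. Because $\ovl \p$ has its last coordinate equal to $0$, the column of $t$'s on the right of the inverse contributes nothing. Thus for $1 \le i \le n-1$, the $i$-th coordinate of $\r^M$ is simply $\frac{1}{nt} p_i$. For the last coordinate, the bottom row $(-1,\ldots,-1 \mid t)$ yields $\frac{1}{nt}\bigl(-p_1 - \cdots - p_{n-1}\bigr)$, and here I use the observation highlighted in the text that $\sum_{i=1}^n p_i = 0$, so $-p_1 - \cdots - p_{n-1} = p_n$. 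This gives $\r^M(\mathcal{T}) = \frac{1}{nt}\,\p(\mathcal{T})$ as claimed.

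For item \emph{ii.}, I will apply the formula from \emph{i.} to both tournaments: $\r^M(\mathcal{T}) = \frac{1}{nt}\,\p(\mathcal{T})$ and $\r^M(\mathcal{T}') = \frac{1}{nt'}\,\p(\mathcal{T}')$. Since by hypothesis $\p(\mathcal{T}) = \p(\mathcal{T}')$, dividing the two identities gives $\r^M(\mathcal{T}') = \frac{t}{t'}\,\r^M(\mathcal{T})$. Because $t/t' > 0$ is a strictly positive scalar, the two rating vectors induce the same weak ordering of their coordinates, so $f_M(\mathcal{T}) = f_M(\mathcal{T}')$.

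There is no real obstacle here; the proposition does all the work, and the only subtlety worth flagging is remembering that the last entry of $\ovl \p$ is $0$ (not $p_n$) and that $p_n$ must then be recovered from the identity $\sum_i p_i = 0$ to write the answer uniformly as $\frac{1}{nt}\p(\mathcal T)$.
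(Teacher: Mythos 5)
Your proof is correct and follows essentially the same route as the paper: item \emph{i.} is obtained by multiplying the explicit inverse from Proposition~\ref{prop-massey} against $\ovl\p$ and recovering the last coordinate via $p_n=-(p_1+\cdots+p_{n-1})$, and item \emph{ii.} is the immediate consequence of that formula (the paper simply calls it ``a straightforward application of (\ref{EqM})'', while you helpfully make explicit that the scalar $t/t'$ is positive and hence preserves the ranking).
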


\begin{proof}
\begin{enumerate}
\item[i.] By Proposition \ref{prop-massey}, and thanks to the equality $p_n=-(p_1+\cdots+p_{n-1})$ we have 
\begin{eqnarray*}
\r^{M}(\mathcal{T})&=&\ovl M_{t,n}^{-1}\:\ovl \p=
\frac{1}{nt}\:
\left(
\begin{array}{cccc|c}
1 & 0 &  \ldots & 0 & t\\
0 & 1 & \ldots & 0 & t\\
\vdots & \vdots &  \ddots &\vdots &\vdots \\
0&0 &\ldots &1  &t\\ \hline
-1&-1 &\ldots &-1 &  t
\end{array}
\right)
\:
\left(
\begin{array}{c}
p_1\\
p_2\\
\vdots\\
p_{n-1}\\
\hline
0
\end{array}
\right)\\
&
=&
\frac{1}{nt}\:
\left(
\begin{array}{c}
p_1\\
p_2\\
\vdots\\
p_{n-1}\\
-(p_1+\cdots+p_{n-1})
\end{array}
\right)
=\frac{1}{nt}\:\p(\mathcal{T}).
\end{eqnarray*}
\item[ii.] It is a straightforward application of (\ref{EqM}).
%
\end{enumerate} 
\end{proof}

\noindent The following assertion is a direct consequence
of the equation  (\ref{EqM}) in Corollary~\ref{cor-massey}
but we emphasize it because it is the main tool in the sequel.

\begin{corollary}\label{cor_massey-strict-order}
If $\mathcal{T}$ is a  tournament made of perfect tournaments, then\\
\mbox{$\p(\mathcal{T})=(p_1,p_2,\ldots,p_n)^{T}$} and 
$\r^{M}(\mathcal{T})=(r_1,r_2,\ldots,r_n)^{T}$ 
have the same variations~:
$$
\forall\: 1\leq i<j\leq n\:,~~~
r_i<r_j \Longleftrightarrow p_i <p_j.$$
\end{corollary}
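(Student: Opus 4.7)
The plan is to observe that this corollary is an immediate restatement of equation~(\ref{EqM}) from Corollary~\ref{cor-massey}, repackaged in a form that is convenient for later arguments. The only content beyond Corollary~\ref{cor-massey} is the elementary fact that multiplication by a strictly positive scalar preserves the strict order on $\mathbb{R}$.

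More precisely, by hypothesis $\mathcal{T}$ is a union of $t$ perfect tournaments for some $t\geq 1$, and the number of players is $n\geq 1$, so the scalar $\frac{1}{nt}$ is strictly positive. Corollary~\ref{cor-massey}(i) then gives $\r^{M}(\mathcal{T})=\frac{1}{nt}\,\p(\mathcal{T})$, i.e. $r_i=\frac{1}{nt}\,p_i$ for every $i$. Fixing any pair of indices $1\leq i<j\leq n$, the chain of equivalences
$$r_i<r_j \iff \tfrac{1}{nt}p_i<\tfrac{1}{nt}p_j \iff p_i<p_j$$
finishes the proof, where the last equivalence uses $\frac{1}{nt}>0$.

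There is no real obstacle here; the statement is isolated purely for emphasis, since in the sequel the authors will deduce the Massey ranking by reading off the coordinate-wise order of the much simpler vector $\p(\mathcal{T})$ rather than solving the linear system $\ovl M_{t,n}\,\r^{M}=\ovl \p$ each time.
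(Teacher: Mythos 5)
Your proposal is correct and matches the paper exactly: the paper gives no separate proof, stating only that the corollary is a direct consequence of equation~(\ref{EqM}) in Corollary~\ref{cor-massey}, which is precisely your observation that $r_i=\frac{1}{nt}p_i$ with $\frac{1}{nt}>0$ preserving strict order. Your remark about why the statement is isolated for emphasis also agrees with the authors' stated motivation.
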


\noindent
We can now prove that the Massey method satisfies the hypotheses of Theorem~\ref{thm:inv}.

\begin{proposition}\label{prop:massey_3prop}
The Massey method is i) natural, ii) can be reduced by  Condorcet tournaments, iii) satisfies the long tournament property.
\end{proposition}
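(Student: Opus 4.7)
The plan is to leverage Corollary \ref{cor_massey-strict-order}, which reduces the Massey ranking on any union of perfect tournaments to a componentwise comparison of the cumulative point differential vector $\p$. Since every tournament appearing in the three properties is a union of perfect tournaments, each part of the proposition reduces to a statement about $\p$, avoiding any further manipulation of the matrix $\ovl M_{t,n}$.

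For \emph{i)} (natural), I would directly compute $\p$ for a single perfect tournament $\mathcal{T}_{P_{\sigma(1)} \succ \cdots \succ P_{\sigma(n)}}$: the player $P_{\sigma(i)}$ has $n-i$ wins and $i-1$ losses, so $p_{\sigma(i)} = n - 2i + 1$. This gives the strictly decreasing sequence $p_{\sigma(1)} > p_{\sigma(2)} > \cdots > p_{\sigma(n)}$, and Corollary \ref{cor_massey-strict-order} then yields the ranking $P_{\sigma(1)} \succ \cdots \succ P_{\sigma(n)}$.

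For \emph{ii)} (reducibility by Condorcet tournaments), the key observation is that in any Condorcet tournament $\mathcal{C}_{P_{\sigma(1)}, \ldots, P_{\sigma(n)}}$ the cyclic symmetry forces each player to accumulate the same total number of wins and losses, so $w_i = l_i$ and hence $\p(\mathcal{C}) = 0$. Using additivity of $\p$ over unions of tournaments, one gets $\p(\mathcal{T} + k \mathcal{C}) = \p(\mathcal{T})$ for all $k \in \NN$, and the equality of rankings $f_M(\mathcal{T} + k\mathcal{C}) = f_M(\mathcal{T})$ follows either from Corollary \ref{cor-massey} \emph{ii} or directly from Corollary \ref{cor_massey-strict-order}.

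For \emph{iii)} (long tournament property), the additivity of $\p$ gives $\p(\mathcal{T}_1 + k \mathcal{T}_2) = \p(\mathcal{T}_1) + k \, \p(\mathcal{T}_2)$. Because $f_M(\mathcal{T}_2)$ is a strict total order, Corollary \ref{cor_massey-strict-order} tells us that the components of $\p(\mathcal{T}_2)$ are pairwise distinct. For each pair $i \ne j$, the sign of $\bigl(p_i(\mathcal{T}_1) + k\, p_i(\mathcal{T}_2)\bigr) - \bigl(p_j(\mathcal{T}_1) + k\, p_j(\mathcal{T}_2)\bigr)$ is eventually governed by the sign of $p_i(\mathcal{T}_2) - p_j(\mathcal{T}_2)$ as $k \to \infty$, exactly as in the Borda argument of Proposition \ref{prop:borda}. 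Taking $k$ larger than the finitely many resulting thresholds makes the componentwise ordering of $\p(\mathcal{T}_1 + k\mathcal{T}_2)$ coincide with that of $\p(\mathcal{T}_2)$, so Corollary \ref{cor_massey-strict-order} delivers $f_M(\mathcal{T}_1 + k\mathcal{T}_2) = f_M(\mathcal{T}_2)$. The only real content in the whole proof is the symmetry computation that gives $\p(\mathcal{C}) = 0$; everything else is mechanical once Corollary \ref{cor_massey-strict-order} is in hand.
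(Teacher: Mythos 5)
Your proposal is correct and follows essentially the same route as the paper: all three parts are reduced, via Corollary~\ref{cor_massey-strict-order}, to statements about the cumulative point differential vector $\p$ (decreasing for a perfect tournament, unchanged by adding Condorcet tournaments, additive under unions). The only cosmetic difference is in part \emph{iii)}, where the paper computes $\lim_{k\to\infty}\r^{M}(\mathcal{T}_1+k\mathcal{T}_2)=\r^{M}(\mathcal{T}_2)$ from the explicit formula of Corollary~\ref{cor-massey}, whereas you run the finite-threshold pairwise comparison on the components of $\p$ as in the Borda case; the two arguments are equivalent given that $\r^{M}$ is proportional to $\p$.
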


\begin{proof}
\noindent $i)$ Let $\mathcal{T}=P_{\sigma(1)} \succ P_{\sigma(2)} \succ \cdots \succ P_{\sigma(n)}$ be a perfect tournament and let $p(\mathcal{T})=(p_1,\ldots,p_n)^T$ be the associated cumulative point differential vector. We have 
$$p_{\sigma(1)}=n-1> p_{\sigma(2)} =n-3> \cdots > p_{\sigma(n)}=-(n-1).$$
 By Corollary~\ref{cor_massey-strict-order}, we have $r_i <r_j$ if and only if $p_i<p_j$, thus, here we get \linebreak $r_{\sigma(1)} > \cdots > r_{\sigma(n)}$. Therefore, the Massey method is natural. 
\smallbreak

\noindent $ii)$ Let $\mathcal{T}$ be the union of $t$ perfect tournaments 
between $n$ players and let $\mathcal{C}$ be a Condorcet tournament.
As $\mathcal{C}$ is a Condorcet tournament,
we  remark that $\p(\mathcal{T})=\p(\mathcal{T}+k\mathcal{C})$.
So, by Corollary \ref{cor_massey-strict-order},
$\r^{M}(\mathcal{T})$ and $\r^{M}(\mathcal{T}+k\mathcal{C})$
have the same variations, since they have the same variations 
as $\p(\mathcal{T})$. This means that $f_{M}(\cT)=f_{M}(\cT+k\cC)$.
\smallbreak

\noindent $iii)$
Let $\mathcal{T}_1$ be the union of $t_1$ perfect tournaments 
between $n$ players and $\mathcal{T}_2$ the union of $t_2$ 
perfect tournaments between these players. 
The cumulative  point differential vectors verify
 $\p(\mathcal{T}_1 + k \mathcal{T}_2)=
\p(\mathcal{T}_1) + k \p(\mathcal{T}_2) $ 
and for the rating vectors,
we have by Corollary \ref{cor-massey}, 
\begin{eqnarray*}
\r^M (\mathcal{T}_1 + k \mathcal{T}_2)&=&
\frac{1}{n(t_1+kt_2)}\p(\mathcal{T}_1 + k \mathcal{T}_2)=
\frac{1}{n(t_1+kt_2)}\big(\p(\mathcal{T}_1) + k \p(\mathcal{T}_2)\big)\\
&=&
\frac{1}{n\left(\disp  \frac{t_1}{k}+t_2\right)}
\left(\frac{1}{k}\p(\mathcal{T}_1)+\p(\mathcal{T}_2)\right).
\end{eqnarray*}
It follows that 
$$\disp \lim_{k\to \infty} \r^{M} (\mathcal{T}_1 + k \mathcal{T}_2)
=\frac{1}{nt_2}\p(\mathcal{T}_2)=\r^{M} (\cT_2).$$

\noindent If $f(\mathcal{T}_2)$ is a strict ordering then the coordinates of $\r^M(\mathcal{T}_2)$ are also strictly linearly ordered. Thus, the previous equality shows that when $k$ is big enough $\r^M(\mathcal{T}_1 + k \mathcal{T}_2)$ and $\r^M(\mathcal{T}_2)$ give the same ranking. Therefore, the Massey method satisfies the long tournament property.
%
\end{proof}

\begin{example}
 We give an example of an inversion paradox 
when we have $n=5$ players. 
As with the Borda method, we consider
$\mathcal{X}= \mathcal{T}_{P_1 \succ P_2 \succ \cdots \succ P_5} + 2 \mathcal{C}_{P_5 \succ P_4 \succ \cdots \succ P_1}$.
The Massey matrix associated to the perfect tournament $\mathcal{T}_{P_1 \succ P_2 \succ \cdots \succ P_5}$ is $M_{1,5}$. The Condorcet tournament $\mathcal{C}_{P_5 \succ P_4 \succ \cdots \succ P_1}$ is the union of five perfect tournaments, thus the Massey matrix associated to  $\mathcal{C}_{P_5 \succ P_4 \succ \cdots \succ P_1}$ is $M_{5,5}$. 
We denote by $M$ the Massey matrix associated to $\mathcal{X}$ and we have 
$$
M=M_{1,5}+2M_{5,5}=M_{11,5}=\begin{pmatrix}
44&-11&-11&-11&-11\\
-11&44&-11&-11&-11\\
-11&-11&44&-11&-11\\
-11&-11&-11&44&-11\\
-11&-11&-11&-11&44
\end{pmatrix}.
$$
The cumulative differential point vector $p$ associated to the perfect tournament  $\mathcal{T}_{P_1 \succ P_2 \succ \cdots \succ P_5}$ is $(4,2,0,-2,-4)^T$ and the one associated to $\mathcal{C}_{P_5 \succ P_4 \succ \ldots \succ P_1}$ is $(0,0,0,0,0)^T$. Thus the cumulative differential point vector $\p$ associated to  $\mathcal{X}$ is 
$$\p=(4,2,0,-2,-4)^T.$$
By (\ref{EqM}) in Corollary \ref{cor-massey},
the solution of $\overline{M}\r=\overline{\p}$ is 
$$\r=\Big(\dfrac{4}{55}, \dfrac{2}{55},0,-\dfrac{2}{55}, -\dfrac{4}{55}\Big)^T.$$
This gives the ranking: $P_1 \succ P_2 \succ P_3 \succ P_4 \succ P_5$.\\

Now, if $P_5$ is deleted then the tournament $\mathcal{X}'$ is 
$$\mathcal{T}_{P_1 \succ P_2 \succ P_3 \succ P_4} +2 \mathcal{T}_{P_4 \succ P_3 \succ P_2 \succ P_1} + 2 \mathcal{C}_{P_4 \succ P_3 \succ P_2 \succ P_1}.$$
 The Massey matrix associated to $\mathcal{X}'$ is 
$$M'= M_{1,4}+2M_{1,4}+2M_{4,4}=M_{11,4}=
\begin{pmatrix}
33&-11&-11&-11\\
-11&33&-11&-11\\
-11&-11&33&-11\\
-11&-11&-11&33
\end{pmatrix}.
$$
The cumulative differential point vector associated to 
the perfect tournament  $\mathcal{T}_{P_1 \succ P_2 \succ P_3 \succ P_4}$ 
is $(3,1,-1,-3)^T$, the one associated to the perfect tournament 
$\mathcal{T}_{P_4 \succ P_3 \succ P_2 \succ P_1}$ is $(-3,-1,1,3)^T$, 
and the one associated to $\mathcal{C}_{P_4 \succ P_3 \succ P_2 \succ P_1}$ is $(0,0,0,0)^T$. 
Thus the cumulative differential point vector $\p'$ associated 
to  $\mathcal{X}'$ is 
$$\p'=(-3,-1,1,3)^T.$$
By (\ref{EqM}) in Corollary \ref{cor-massey}, the solution
 of $\overline{M'}\r'=\overline{\p}'$ is 
$$\r'=\Big(-\dfrac{3}{44},-\dfrac{1}{44},\dfrac{1}{44},
\dfrac{3}{44}\Big)^T.$$
This gives the ranking: $P_4 \succ P_3 \succ P_2 \succ P_1$. Thus an inversion paradox occurs when we consider the tournament $\mathcal{X}'=\mathcal{T}_{P_1 \succ P_2 \succ \cdots \succ P_5} + 2 \mathcal{C}_{P_5 \succ P_4 \succ \cdots \succ P_1}$.\\

If we consider the tournament where we add just one Condorcet tournament $\mathcal{C}_{P_5 \succ P_4 \succ \cdots \succ P_1}$ to  the perfect tournament $\mathcal{T}_{P_1 \succ P_2 \succ \cdots \succ P_5}$, then the vector $\p$ is still $(4,2,0,-2,-4)$ but $\p'$ becomes $(0,0,0,0)^T$. Thus, in this situation, deleting $P_5$  implies $P_1=P_2=P_3=P_4$. The order is modified but we do not get an inversion paradox.\\
\end{example}

\subsection{Application to the Colley method.}


The calculations we have just made using Massey method 
allow us to get a similar conclusion for Colley method.
Actually, while the rating vector $\r^{M}$ of the Massey method
is given by $M\r^{M}=\p$, 
 the rating vector $\r^{C}$ of the Colley method
 is given by $C\r^C=\b$ where $C=M+2I$ and $\b=\mathbf{e} +\frac{1}{2}\p$ with $\mathbf{e}=(1,\ldots,1)^T$.
 In particular, il we consider a tournament made up of
 the union of $t$ perfect tournaments between 
$n$ players, $M_{t,n}\mathbf{e}=0$ or 
$C_{t,n}\mathbf{e}=2 \mathbf{e}$ and if we set
$\t r^{C}:=2\r^{C}-\mathbf{e}$, we get
\begin{eqnarray*}
C_{t,n}\r^{C}=\b &\Longleftrightarrow &
C_{t,n}(2 \r^{C}-\mathbf{e})=2\b -C_{t,n}(\mathbf{e})
 \Longleftrightarrow
C_{t,n}(\t \r^{C})=2(\b-\mathbf{e})=\p.
\end{eqnarray*}
Therefore, since $\r^{C}$ and $\t \r^{C}$  give the same ranking, 
we have to find  $\t \r^{C}$ which is the solution
of  $C_{t,n}\t \r^{C}=\p$ or, equivalently,
$(M_{t,n}+2I_n)\t \r^{C}=\p$. By Corollary~\ref{cor-massey},
we get $(M_{t,n}+2I)\p=nt\p+2\p$ and the solution
\begin{equation}
\tag{2}
\t \r^{C}(\mathcal{T})=\frac{1}{nt+2}\:\p(\mathcal{T}). \label{EqC}
\end{equation}
Consequently, using the same reasoning as for Massey method, 
we obtain the following conclusion.

\begin{proposition}
The Colley method is i) natural, ii) can be reduced by  Condorcet tournaments, iii)  satisfies the long tournament property.
\end{proposition}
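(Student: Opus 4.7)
The plan is to mimic the argument for Proposition \ref{prop:massey_3prop} almost verbatim, using equation (\ref{EqC}) in place of (\ref{EqM}). The essential observation is that $\t \r^{C}(\mathcal{T})$ differs from $\r^{M}(\mathcal{T})$ only by the positive scaling constant $\frac{1}{nt+2}$ instead of $\frac{1}{nt}$; since $\r^{C}$ and $\t \r^{C}$ induce the same ranking (as $\t \r^{C} = 2\r^{C} - \mathbf{e}$ is an affine transformation with positive slope applied coordinatewise), the Colley ranking has exactly the same variations as the cumulative point differential vector $\p(\mathcal{T})$. In particular, the analogue of Corollary~\ref{cor_massey-strict-order} holds for Colley.

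For (i), I apply (\ref{EqC}) to a perfect tournament $\mathcal{T}_{P_{\sigma(1)} \succ \cdots \succ P_{\sigma(n)}}$. The associated vector $\p$ has coordinates $n-1, n-3, \ldots, -(n-1)$ in positions $\sigma(1), \ldots, \sigma(n)$ respectively, so $\t \r^{C}$ is strictly decreasing in the same order and yields the claimed ranking. For (ii), let $\mathcal{T}$ be the union of $t$ perfect tournaments and $\mathcal{C}$ a Condorcet tournament consisting of $n$ perfect tournaments. Since each player records the same number of wins and losses inside $\mathcal{C}$, we have $\p(\mathcal{T}+k\mathcal{C})=\p(\mathcal{T})$, so by (\ref{EqC}), $\t \r^{C}(\mathcal{T}+k\mathcal{C}) = \frac{1}{n(t+kn)+2}\p(\mathcal{T})$ is a positive scalar multiple of $\t \r^{C}(\mathcal{T}) = \frac{1}{nt+2}\p(\mathcal{T})$. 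Hence $f_C(\mathcal{T}+k\mathcal{C})=f_C(\mathcal{T})$.

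For (iii), let $\mathcal{T}_1$ and $\mathcal{T}_2$ be unions of $t_1$ and $t_2$ perfect tournaments respectively, with $f_C(\mathcal{T}_2)$ a strict total order. Using the additivity $\p(\mathcal{T}_1+k\mathcal{T}_2)=\p(\mathcal{T}_1)+k\p(\mathcal{T}_2)$ and (\ref{EqC}),
\begin{eqnarray*}
\t \r^{C}(\mathcal{T}_1+k\mathcal{T}_2)
&=& \frac{1}{n(t_1+kt_2)+2}\bigl(\p(\mathcal{T}_1)+k\p(\mathcal{T}_2)\bigr)\\
&=& \frac{1}{n\left(\frac{t_1}{k}+t_2\right)+\frac{2}{k}}
\left(\frac{1}{k}\p(\mathcal{T}_1)+\p(\mathcal{T}_2)\right),
\end{eqnarray*}
so $\lim_{k\to\infty}\t \r^{C}(\mathcal{T}_1+k\mathcal{T}_2)=\frac{1}{nt_2}\p(\mathcal{T}_2)$, whose coordinates are strictly ordered in the order $f_C(\mathcal{T}_2)$ (since that ordering is strict and total). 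By continuity of the strict inequalities at the limit, $\t \r^{C}(\mathcal{T}_1+k\mathcal{T}_2)$ is strictly ordered in the same way for $k$ large enough, giving $f_C(\mathcal{T}_1+k\mathcal{T}_2)=f_C(\mathcal{T}_2)$.

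There is really no obstacle: the whole argument rides on formula (\ref{EqC}), which the authors have already established. The only minor point requiring care is justifying that passing from $\r^{C}$ to $\t \r^{C}$ does not alter the ranking, but this is immediate from $\t \r^{C}=2\r^{C}-\mathbf{e}$.
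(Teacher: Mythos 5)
Your proposal is correct and follows exactly the route the paper intends: the paper derives equation (\ref{EqC}), notes that $\t \r^{C}=2\r^{C}-\mathbf{e}$ preserves the ranking, and then simply states that ``using the same reasoning as for Massey method'' the proposition follows, which is precisely the reasoning you have written out in full. No discrepancies; you have merely made explicit the details the paper leaves to the reader.
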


\begin{corollary}
The inversion paradox occurs with the Colley method.
\end{corollary}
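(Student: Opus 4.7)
The plan is essentially a one-line argument: combine the immediately preceding Proposition (which verifies that the Colley method is natural, reducible by Condorcet tournaments, and satisfies the long tournament property) with Theorem~\ref{thm:inv} (which says that any method with these three properties suffers from the inversion paradox). So the proof reduces to invoking those two results in sequence, exactly as was done for the Borda and Massey corollaries.

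To make the corollary concrete and to mirror the presentation of the Borda and Massey cases, I would also include the explicit witness $\mathcal{X} = \mathcal{T}_{P_1 \succ P_2 \succ \cdots \succ P_5} + 2\,\mathcal{C}_{P_5 \succ P_4 \succ \cdots \succ P_1}$ that already appeared in those earlier examples. For this, formula (\ref{EqC}) does all the work: since $\mathcal{X}$ is the union of $t=11$ perfect tournaments on $n=5$ players with $\p(\mathcal{X})=(4,2,0,-2,-4)^T$, one reads off $\t\r^{C}(\mathcal{X}) = \frac{1}{57}(4,2,0,-2,-4)^T$, giving $P_1 \succ P_2 \succ P_3 \succ P_4 \succ P_5$. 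For $\mathcal{X}'$, which is the union of $11$ perfect tournaments on $4$ players with $\p(\mathcal{X}')=(-3,-1,1,3)^T$, formula (\ref{EqC}) yields $\t\r^{C}(\mathcal{X}') = \frac{1}{46}(-3,-1,1,3)^T$, giving $P_4 \succ P_3 \succ P_2 \succ P_1$. Hence the inversion paradox is exhibited on the same tournament used for Borda and Massey.

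There is no real obstacle here: all the work was done in establishing (\ref{EqC}) and in proving the three abstract properties. The only thing to be careful about is that, strictly speaking, Theorem~\ref{thm:inv} is stated for tournaments that are unions of perfect tournaments, which is exactly the setting in which (\ref{EqC}) was derived, so the hypotheses match without any gap. I would therefore keep the written proof to a two-sentence invocation of Theorem~\ref{thm:inv} and Proposition on the Colley method, optionally followed by a short \emph{Example} environment displaying the numerical witness above for readers who want to see the paradox explicitly rather than through the abstract reduction.
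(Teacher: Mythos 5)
Your proposal matches the paper's argument exactly: the corollary is obtained by combining the preceding Proposition (Colley is natural, reducible by Condorcet tournaments, and satisfies the long tournament property) with Theorem~\ref{thm:inv}, and the paper then illustrates it with the very same witness $\mathcal{X}$ via formula (\ref{EqC}). Your numerical values ($\t{\r}^{C}(\mathcal{X})=\frac{1}{57}\p$ and $\t{\r}^{C}(\mathcal{X}')=\frac{1}{46}\p'$) agree with the paper's rating vectors $\r=\frac{1}{2}(\t{\r}^{C}+\mathbf{e})$, so the proof is correct and essentially identical to the paper's.
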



\begin{example} 
 As before, we consider
$\mathcal{X}= \mathcal{T}_{P_1 \succ P_2 \succ \cdots \succ P_5} + 2 \mathcal{C}_{P_5 \succ P_4 \succ \cdots \succ P_1}$. The Colley matrix and the vector $\b$ associated to $\mathcal{X}$ are
$$
C=\begin{pmatrix}
46&-11&-11&-11&-11\\
-11&46&-11&-11&-11\\
-11&-11&46&-11&-11\\
-11&-11&-11&46&-11\\
-11&-11&-11&-11&46
\end{pmatrix}
\textrm{ and \,\,}
\b=\begin{pmatrix}
3\\2\\1\\0\\-1
\end{pmatrix}.
$$

By (\ref{EqC}), the
solution of $C\r=\p$ is $\r=\Big(\dfrac{61}{114}, \dfrac{59}{114},\dfrac{57}{114},\dfrac{55}{114},\dfrac{53}{114}\Big)^T$ and this gives the ranking: $P_1 \succ P_2 \succ P_3 \succ P_4 \succ P_5$.\\

Now, if $P_5$ is deleted then we have the tournament $\mathcal{X}'$.  The Colley matrix  and the vector $\b'$ associated to $\mathcal{X}'$ are
$$C'= 
\begin{pmatrix}
35&-11&-11&-11\\
-11&35&-11&-11\\
-11&-11&35&-11\\
-11&-11&-11&35
\end{pmatrix}
\textrm{ and  \,\,}
\b'= \begin{pmatrix}
-1/2 \\ \hphantom{-}1/2 \\ \hphantom{-}3/2 \\ \hphantom{-}5/2
\end{pmatrix}.
$$

By (\ref{EqC}), the
solution of $C'\r'=\b'$ is $\r'=\Big(\dfrac{43}{92},\dfrac{45}{92},\dfrac{47}{92},\dfrac{49}{92}\Big)^T$ and this gives the ranking: $P_4 \succ P_3 \succ P_2 \succ P_1$. Thus an inversion paradox occurs.\\

If we consider the tournament where we add just one Condorcet tournament \linebreak $\mathcal{C}_{P_5 \succ P_4 \succ \cdots \succ P_1}$ to the perfect tournament $\mathcal{T}_{P_1 \succ P_2 \succ \cdots \succ P_5}$, then we obtain a situation similar to that obtained with Massey's method. That is, if $P_5$ is deleted, we obtain $P_1=P_2=P_3=P_4$. The order is changed but we do not get an inversion paradox. 
\end{example}

\subsection{A remark about the Markov method.}
With the Markov method applied to the tournaments  $\mathcal{X}= \mathcal{T}_{P_1 \succ P_2 \succ \cdots \succ P_5} + 2 \mathcal{C}_{P_5 \succ P_4 \succ \cdots \succ P_1}$,  an inversion paradox occurs. However, this paradox is not the one  given by  Theorem~\ref{thm:inv}.
The matrix $G$ associated to this tournament when we use the Markov method is
$$G=\begin{pmatrix}
0 & 3/21 & 5/22 & 7/23 & 9/24 
\\
 8/20 & 0 & 3/22 & 5/23 & 7/24 
\\
 6/20 & 8/21 & 0 & 3/23 & 5/24
\\
 4/20 & 6/21 &  8/22& 0 & 3/24
\\
 2/20 & 4/21 & 6/22 & 8/23 & 0 
\end{pmatrix}.
$$
Indeed, the total number of losses of $P_1$ is 20 
and $P_1$ is defeated 8 times by $P_2$, 6 times 
by $P_3$, 4 times by $P_4$, and 2 times by $P_5$, and so on\ldots\\
 With a computer algebra system, we get the following  eigenvector  associated 
to the eigenvalue $1$ of \mbox{$\alpha G+\frac{1-\alpha}{5}N_5$}, where $N_5$ is the  $5 \times 5$ matrix with all coefficients equal to 1:
$$\r=
\begin{pmatrix}
\frac{12345 \alpha^{4}+137110 \alpha^{3}+609815 \alpha^{2}+1338210 \alpha +1275120}{8148 \alpha^{4}+99852 \alpha^{3}+484608 \alpha^{2}+1161672 \alpha +1275120}
\\
\frac{4319 \alpha^{4}+47866 \alpha^{3}+210539 \alpha^{2}+444346 \alpha +425040}{2716 \alpha^{4}+33284 \alpha^{3}+161536 \alpha^{2}+387224 \alpha +425040}
\\
\frac{13519 \alpha^{4}+147686 \alpha^{3}+623469 \alpha^{2}+1300266 \alpha +1275120}{8148 \alpha^{4}+99852 \alpha^{3}+484608 \alpha^{2}+1161672 \alpha +1275120}
\\
\frac{13519 \alpha^{4}+147686 \alpha^{3}+623469 \alpha^{2}+1300266 \alpha +1275120}{8148 \alpha^{4}+99852 \alpha^{3}+484608 \alpha^{2}+1161672 \alpha +1275120}
\\
\tiny{1}
\end{pmatrix}.
$$
Furthermore, if $1 \geq \alpha>2/9\approx 0.22$ then this gives the ranking:\\
$P_2\succ P_1 \succ P_3 \succ P_4 \succ P_5$, while the Markov method applied to $\mathcal{T}_{P_1 \succ P_2 \succ \cdots \succ P_5}$ gives the ranking $P_1 \succ P_2 \succ \cdots \succ P_5$. This means that the  Markov method  
cannot be reduced by Condorcet tournaments.\\

Now, we consider the tournament $\mathcal{X}'$ 
when $P_5$ is deleted. Then the  matrix used 
by the Markov method is
$$G'=\begin{pmatrix}
0& 3/17&5/16&7/15\\
8/18&0&3/16&5/15\\
6/18&8/17&0&3/15\\
4/18&6/17&8/16&0
\end{pmatrix}.
$$
This means for example that in $\mathcal{X}'$ 
the player $P_2$ loses 17 matches and among them 3 losses are against $P_1$.  

An eigenvector  associated 
to the eigenvalue $1$ of \mbox{$\alpha G+\frac{1-\alpha}{4}N_4$} is 
$$\r'=
\begin{pmatrix}
\frac{741 \alpha^{3}+4818 \alpha^{2}+11697 \alpha +12240}{895 \alpha^{3}+5665 \alpha^{2}+13160 \alpha +12240}\\
\frac{2108 \alpha^{3}+13923 \alpha^{2}+35445 \alpha +36720}{2685 \alpha^{3}+16995 \alpha^{2}+39480 \alpha +36720}\\
\frac{2064 \alpha^{3}+14528 \alpha^{2}+36864 \alpha +36720}{2685 \alpha^{3}+16995 \alpha^{2}+39480 \alpha +36720}\\
1
\end{pmatrix}.
$$
Furthermore, if $1 \geq \alpha> -\frac{531}{230} + \frac{\sqrt{444801}}{230}\approx 0.59$ then this gives the ranking:\\ $P_4\succ P_3 \succ P_1 \succ P_2$.\\

We remark that, if $1 \geq \alpha > 0.59$, then an inversion paradox occurs. Indeed,
 when $P_5$ is  deleted we obtain the opposite order 
between the other players. This example shows that Theorem~\ref{thm:inv} 
gives only a sufficient condition for the occurrence of an inversion paradox since Markov method is not reducible by Condorcet tournaments as we have seen before.




\section{Inversion with few matches.}
In the previous section, we have \mbox{obtained} an inversion paradox with tournaments of the following kind: $\mathcal{T}_k=\mathcal{T}_{P_1 \succ P_2 \succ \cdots \succ P_n}+k \mathcal{C}_{P_n \succ P_{n-1} \succ \cdots \succ P_1}$, where $k$ is big enough. We have seen in the examples that  an inversion paradox is obtained  for the Borda, Massey, Colley, and Markov methods, if $k=2$. In this situation, each player has $(n-1)(2n+1)$ matches. If we have five players, then each player must play 44 matches. This situation can appear if we consider the results of several matches over several seasons. However, this leads to the following question: Can we obtain an inversion paradox when the number of matches of each player is small?

To answer this question, we introduce some notations.

\begin{definition}
$\mathcal{R}_{n}^+$ is the tournament with $n$ matches 
between players $P_1$, \ldots, $P_n$, where the $i$-th 
match is $(P_i,P_{i+1}, 0,1)$, when $i=1, \ldots,n-1$, 
and the last match is $(P_n,P_1,0,1)$.\\
$\mathcal{S}_n^-$ is the tournament with $n-1$ matches between 
players $P_1$, \ldots $P_n$ where the $i$-th match is 
$(P_i,P_{i+1},1,0)$, where $i=1, \ldots, n-1$. \\
$\mathcal{S}_n^+$ is the same tournament as $\mathcal{S}_n^-$ but 
with the opposite results,
i. e., with 
$(P_i,P_{i+1},0,1)$, where $i=1, \ldots, n-1$.\\
For every $(k,\ell)\in \mathbb{N}\times \mathbb{N}\setminus \{(0,0)\}$,
we define the tournament \mbox{$\mathcal{Z}_{n,k,l}=k\cR_n^+ + \ell\cS_n^-$}.
\end{definition}

 We can represent $\mathcal{R}_n^+$ and $\mathcal{S}_n^-$
 in the following way:\\
 \begin{tikzpicture}
[line cap=round,line join=round,>=latex,x=0.7cm,y=0.7cm]
\draw[->,line width=1.5pt] (3*cos{70},3*sin{70}) arc (70:109:3);
\draw[->,line width=1.5pt] (3*cos{30},3*sin{30}) arc (30:69:3);
\draw[->,line width=1.5pt] (3*cos{-10},3*sin{-10}) arc (-10:29:3);
\draw[->,line width=1.5pt] (3*cos{-50},3*sin{-50}) arc (-50:-11:3);
\draw[dashed, line width=1.5pt] (3*cos{-50},3*sin{-50}) 
arc (-50:-130:3);

\draw[->,line width=1.5pt] (3*cos{-170},3*sin{-170}) arc(-170:-131:3);
\draw[->,line width=1.5pt] (3*cos{-210},3*sin{-210}) arc(-210:-171:3);
\draw[->,line width=1.5pt] (3*cos{-250},3*sin{-250}) arc(-250:-211:3);

\draw[fill=blue] (3*cos{30},3*sin{30}) circle (2.5pt);
\draw[fill=blue] (3*cos{-10},3*sin{-10}) circle (2.5pt);
\draw[fill=blue] (3*cos{-50},3*sin{-50}) circle (2.5pt);
\draw[fill=blue] (3*cos{70},3*sin{70}) circle (2.5pt);
\draw[fill=blue] (3*cos{-130},3*sin{-130}) circle (2.5pt);
\draw[fill=blue] (3*cos{-170},3*sin{-170}) circle (2.5pt);
\draw[fill=blue] (3*cos{150},3*sin{150}) circle (2.5pt);
\draw[fill=blue] (3*cos{110},3*sin{110}) circle (2.5pt);

\draw (3*cos{70},3*sin{70}) node[above]{$P_1$};
\draw (3*cos{30}+0.2,3*sin{30}) node[above]{$P_2$};
\draw (3*cos{-10},3*sin{-10}) node[right]{$P_3$};
\draw (3*cos{-50}+0.1,3*sin{-50}-0.1) node[below]{$P_4$};
\draw (3*cos{-130}-0.2,3*sin{-130}) node[below]{$P_{n-3}$};
\draw (3*cos{-170}-0.1,3*sin{-170}) node[left]{$P_{n-2}$};
\draw (3*cos{150}-0.1,3*sin{150}+0.1) node[left]{$P_{n-1}$};
\draw (3*cos{110},3*sin{110}) node[above]{$P_{n}$};

\draw (0,-4.2) node[above]{The tournament $\mathcal R_n^+$};
\end{tikzpicture}
~~~~~~
\begin{tikzpicture}
[line cap=round,line join=round,>=latex,x=0.7cm,y=0.7cm]
\draw[->,line width=1.5pt] (3*cos{70},3*sin{70}) arc (70:31:3);
\draw[->,line width=1.5pt] (3*cos{30},3*sin{30}) arc (30:-9:3);
\draw[->,line width=1.5pt] (3*cos{-10},3*sin{-10}) arc (-10:-49:3);
\draw[dashed, line width=1.5pt] (3*cos{-50},3*sin{-50}) 
arc (-50:-130:3);
\draw[->,line width=1.5pt] (3*cos{-130},3*sin{-130}) arc (-130:-171:3);
\draw[->,line width=1.5pt] (3*cos{-170},3*sin{-170}) arc (-170:-209:3);
\draw[->,line width=1.5pt] (3*cos{-210},3*sin{-210}) arc (-210:-249:3);

\draw[fill=blue] (3*cos{30},3*sin{30}) circle (2.5pt);
\draw[fill=blue] (3*cos{-10},3*sin{-10}) circle (2.5pt);
\draw[fill=blue] (3*cos{-50},3*sin{-50}) circle (2.5pt);
\draw[fill=blue] (3*cos{70},3*sin{70}) circle (2.5pt);
\draw[fill=blue] (3*cos{-130},3*sin{-130}) circle (2.5pt);
\draw[fill=blue] (3*cos{-170},3*sin{-170}) circle (2.5pt);
\draw[fill=blue] (3*cos{150},3*sin{150}) circle (2.5pt);
\draw[fill=blue] (3*cos{110},3*sin{110}) circle (2.5pt);

\draw (3*cos{70},3*sin{70}) node[above]{$P_1$};
\draw (3*cos{30}+0.2,3*sin{30}) node[above]{$P_2$};
\draw (3*cos{-10},3*sin{-10}) node[right]{$P_3$};
\draw (3*cos{-50}+0.1,3*sin{-50}-0.1) node[below]{$P_4$};
\draw (3*cos{-130}-0.2,3*sin{-130}) node[below]{$P_{n-3}$};
\draw (3*cos{-170}-0.1,3*sin{-170}) node[left]{$P_{n-2}$};
\draw (3*cos{150}-0.1,3*sin{150}+0.1) node[left]{$P_{n-1}$};
\draw (3*cos{110},3*sin{110}) node[above]{$P_{n}$};

\draw (0,-4.2) node[above]{The tournament $\mathcal S_n^-$};
\end{tikzpicture}

\noindent
In this section, we are going to prove that 
an inversion paradox occurs  for the Massey 
and the Colley methods with the tournaments 
$\mathcal{Z}_{n,k,\ell}$ if $k>\ell>0$.  
We remark that the removal of player $P_n$ in the tournament
$\cR_n^+$ gives the tournament $\cS_{n-1}^+$. This implies 
$(\mathcal{Z}_{n,k,\ell})'=k \cS_{n-1}^+ + \ell \cS_{n-1}^-$.

\subsection{The Massey method.}
\label{subsec-RSMassey}

Let us first make some remarks about the linear system associated to the Massey method applied to  the tournament $\mathcal{Z}_{n,k,\ell}$. The associated matrix $\overline{M}$ has the following form:

\begin{small}
$$\overline{M}=\begin{pmatrix}
2k+\ell&-(k+\ell)&0&0&  \cdots &0 &-k\\
-(k+\ell)&2(k+\ell)&-(k+\ell)&0&\cdots&0& 0\\
0&-(k+\ell)&2(k+\ell)&-(k+\ell) &0& \cdots &0\\
\vdots & \ddots & \ddots & \ddots & \ddots & \ddots & \vdots\\
0& \cdots & 0 &-(k+\ell) &2(k+\ell)&-(k+\ell)&0\\
0&\cdots&0&0&-(k+\ell) &2(k+\ell)&-(k+\ell)\\
1&1&1&1&\ldots&1&1\\
\end{pmatrix}.
$$
\end{small}

\noindent Furthermore, the vector $\ovl \p_n$ is equal to $(\ell,0,\cdots,0)^T$
and, in particular,
it does not depend on $k$.  Due to the structure of 
 the linear system 
$\ovl M\r=\ovl p_n$, the coordinates of
the ranking vector \mbox{$\r=(r_1,r_2,\ldots, r_n)^T$}
satisfy the  recurrence relation 
$$-(k+\ell)r_i +2(k+\ell) r_{i+1} -(k+\ell)r_{i+2}=0 \iff r_i-2r_{i+1}+r_{i+2}=0.$$
 This implies the 
existence of real numbers $\a$ and $\bb$ such that:
$$r_i=\a+\bb i, \textrm{ for } i=1, \ldots, n.$$

\begin{remark}\label{remarque-banale-Massey}
We have
$$r_{i+1}>r_{i}\Longleftrightarrow
\a+\bb(i+1)-(\a+\bb i) >0 \Longleftrightarrow
\bb>0.$$
Therefore, if $\bb>0$  then $(r_i)_{i\geq 1}$ is a strictly increasing sequence
and if $\bb<0$ it is  a strictly decreasing sequence.
If $\bb=0$, the sequence is constant. 
\end{remark}

Moreover, when we consider the first and the last row of the system, we get
$$(S_M)\quad\quad
\begin{cases}
(2k+\ell) r_1-(k+\ell) r_2 -kr_n=\ell\\
r_1+r_2+\cdots+r_n=0.
\end{cases}
$$
This system allows us to obtain $(\a,\bb)$ in terms of $k$ and $\ell$.\\

Now, when $P_n$ is deleted, the linear system associated to the Massey method applied to  the tournament is $(\mathcal{Z}_{n,k,\ell})'=k \mathcal{S}_{n-1}^++\ell\mathcal{S}_{n-1}^-$ and the associated matrix $\overline{M}'$ has the following form

\begin{small}
$$\overline{M}'=\begin{pmatrix}
k+\ell&-(k+\ell)&0&0&  \ldots &0 &0\\
-(k+\ell)&2(k+\ell)&-(k+\ell)&0&\ldots&0& 0\\
0&-(k+\ell)&2(k+\ell)&-(k+\ell) &0& \ldots &0\\
\vdots & \ddots & \ddots & \ddots & \ddots & \ddots & \vdots\\
0& \ldots & 0 &-(k+\ell) &2(k+\ell)&-(k+\ell)&0\\
0&\ldots&0&0&-(k+\ell) &2(k+\ell)&-(k+\ell)\\
1&1&1&1&\ldots&1&1\\
\end{pmatrix}.
$$
\end{small}

In this situation the cumulative point differential vector is $\p'=(\ell-k,0,\ldots,0)^T$.\\
As before, due to the structure of the linear system $\overline{M}'\r'=\p'$, the coordinates of the vector $\r'=(r'_1, \ldots,r'_{n-1})$ satisfy the recurrence relation 
$$r'_i-2r'_{i+1}+r'_{i+2}=0.$$
 Therefore, there exist real numbers $\alpha'$ and $\beta'$ such that 
 $$r'_i=\alpha'+\beta'i, \textrm{ for } i=1, \ldots, n-1$$ and we  have $r'_{i+1}>r'_i \iff \beta'>0$.\\
Moreover, the first and the last row of this system gives 
$$
(S'_M)\quad\quad
\begin{cases}
(k+\ell) r'_1-(k+\ell) r'_2 =\ell-k \\
r'_1+r'_2+\cdots+r'_{n-1}=0 .
\end{cases}
$$
Thanks to this system we can write $\alpha'$ and $\beta'$ in terms of $k$ and $\ell$.

\begin{proposition}\label{prop:Massey-ZRS}
For all positive integers $n$, $k$ and $\ell$, 
an inversion paradox occurs for the Massey method
applied to the tournament
$\mathcal{Z}_{n,k,\ell}$ if and only if $k>\ell>0$.
\end{proposition}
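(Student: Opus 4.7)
The plan is to exploit the structure already derived: the Massey ratings $r_i = \alpha + \beta i$ and $r'_i = \alpha' + \beta' i$ are arithmetic progressions, so by Remark~\ref{remarque-banale-Massey} each ranking is determined entirely by the sign of its slope. Consequently the inversion paradox is equivalent to the algebraic condition that $\beta$ and $\beta'$ be both nonzero and of opposite signs.

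First, I would substitute $r_i = \alpha + \beta i$ into the first equation of $(S_M)$. The coefficient of $\alpha$ is $(2k+\ell) - (k+\ell) - k = 0$, so the $\alpha$-term vanishes and the equation collapses to $-(kn+\ell)\beta = \ell$, giving $\beta = -\ell/(kn+\ell)$. Since $k, n, \ell$ are positive integers, $\beta < 0$ and hence $r_1 > r_2 > \cdots > r_n$, so $f_{M}(\mathcal{Z}_{n,k,\ell}) = P_1 \succ P_2 \succ \cdots \succ P_n$ unconditionally.

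Next, I would perform the same substitution $r'_i = \alpha' + \beta' i$ in the first equation of $(S'_M)$. This reduces immediately to $(k+\ell)(r'_1 - r'_2) = \ell - k$, that is $-(k+\ell)\beta' = \ell - k$, whence $\beta' = (k-\ell)/(k+\ell)$. Thus $\beta' > 0$ iff $k > \ell$, and in that case the ranking of $(\mathcal{Z}_{n,k,\ell})'$ is the reversed order $P_{n-1} \succ P_{n-2} \succ \cdots \succ P_1$, producing the inversion paradox. If instead $k = \ell$ then $\beta' = 0$ and all $r'_i$ are equal, so the strict inverse ranking fails; and if $k < \ell$ then $\beta' < 0$, so the direction is preserved as $P_1 \succ \cdots \succ P_{n-1}$ and there is again no inversion.

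Combining the two computations yields the equivalence: the inversion paradox occurs if and only if $k > \ell$, which given that $\ell$ is a positive integer is precisely $k > \ell > 0$. The whole argument is a short algebraic verification once one leverages the arithmetic form $r_i = \alpha + \beta i$ that comes from the recurrence $r_i - 2r_{i+1} + r_{i+2} = 0$; the only thing to be careful about is the sign bookkeeping linking Remark~\ref{remarque-banale-Massey} to the two direction conditions, and this is where I expect readers to want the explicit formulas displayed.
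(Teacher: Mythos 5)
Your proposal is correct and follows essentially the same route as the paper: substitute the arithmetic form $r_i=\alpha+\beta i$ (resp.\ $r'_i=\alpha'+\beta' i$) into the first row of $(S_M)$ (resp.\ $(S'_M)$) to get $\beta=-\ell/(\ell+kn)<0$ and $\beta'=(k-\ell)/(k+\ell)$, then conclude via Remark~\ref{remarque-banale-Massey}. Your explicit treatment of the cases $k=\ell$ and $k<\ell$ for the ``only if'' direction is a slight elaboration of what the paper leaves implicit, but the argument is the same.
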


\begin{proof}
Let $\r=(\a + \bb i)_{1\leq i \leq n}$ be  the ranking vector
obtained with the Massey method applied 
to the tournament $\mathcal{Z}_{n,k,\ell}$.
The first row of $(S_M)$ gives
$\disp \bb=\frac{-\ell}{\ell+kn}<0$. Indeed, the first row gives
$$(2k+\ell) (\a+\bb)-(k+\ell) (\a+2\bb) -k(\a+n\bb)=\ell
\Longleftrightarrow
(-\ell-kn)\bb=\ell.$$
Therefore,  since $\ell>0$, we have $\beta <0$. By Remark \ref{remarque-banale-Massey},
 we  conclude that \linebreak $P_1 \succ P_2 \succ \cdots \succ P_n$.\\

When the player $P_n$ is removed,
we get the tournament 
\mbox{$(\mathcal{Z}_{n,k,\ell})'=k \cS_{n-1}^+ + \ell \cS_{n-1}^-$}.\\
In the same way as before, but with the system $S'_M$ instead of $S_M$, we get $\disp \bb'=\frac{k-\ell}{\ell+k}$.  Thus $\beta'>0$ if and only if $k>\ell$. This gives 
\mbox{$P_{n-1} \succ P_{n-2}  \succ \cdots \succ P_2 \succ P_1$}
if and only if $k>\ell$.\\
 Therefore, we conclude that an inversion paradox occurs if and only if $k>\ell>0$.
\end{proof}

\begin{example}
We will study the tournament $\mathcal{Z}_{n,2,1}=
2\mathcal R_n^+ + \mathcal{S}_n^-$ :
\end{example}

\begin{center}
\begin{tikzpicture}
[line cap=round,line join=round,>=latex,x=0.3cm,y=0.3cm]
\clip(-5.5,-5.8) rectangle (14.5,4.5);
\draw[->,line width=1.5pt] (3*cos{70},3*sin{70}) arc (70:109:3);
\draw[->,line width=1.5pt] (3*cos{30},3*sin{30}) arc (30:69:3);
\draw[->,line width=1.5pt] (3*cos{-10},3*sin{-10}) arc (-10:29:3);
\draw[->,line width=1.5pt] (3*cos{-50},3*sin{-50}) arc (-50:-11:3);
\draw[dashed, line width=1.5pt] (3*cos{-50},3*sin{-50}) 
arc (-50:-130:3);

\draw[->,line width=1.5pt] (3*cos{-170},3*sin{-170}) arc(-170:-131:3);
\draw[->,line width=1.5pt] (3*cos{-210},3*sin{-210}) arc(-210:-171:3);
\draw[->,line width=1.5pt] (3*cos{-250},3*sin{-250}) arc(-250:-211:3);

\begin{scriptsize}
\draw[fill=blue] (3*cos{30},3*sin{30}) circle (2.5pt);
\draw[fill=blue] (3*cos{-10},3*sin{-10}) circle (2.5pt);
\draw[fill=blue] (3*cos{-50},3*sin{-50}) circle (2.5pt);
\draw[fill=blue] (3*cos{70},3*sin{70}) circle (2.5pt);
\draw[fill=blue] (3*cos{-130},3*sin{-130}) circle (2.5pt);
\draw[fill=blue] (3*cos{-170},3*sin{-170}) circle (2.5pt);
\draw[fill=blue] (3*cos{150},3*sin{150}) circle (2.5pt);
\draw[fill=blue] (3*cos{110},3*sin{110}) circle (2.5pt);

\draw (3*cos{70},3*sin{70}+0.1) node[above]{$P_1$};
\draw (3*cos{30}+0.8,3*sin{30}-0.5) node[above]{$P_2$};
\draw (3*cos{-10},3*sin{-10}) node[right]{$P_3$};
\draw (3*cos{-50}+0.6,3*sin{-50}-0.1) node[below]{$P_4$};
\draw (3*cos{-130}-0.2,3*sin{-130}) node[below]{$P_{n-3}$};
\draw (3*cos{-170}-0.1,3*sin{-170}) node[left]{$P_{n-2}$};
\draw (3*cos{150}-0.1,3*sin{150}+0.1) node[left]{$P_{n-1}$};
\draw (3*cos{110},3*sin{110}+0.1) node[above]{$P_{n}$};
\end{scriptsize}

*****************   Deuxième roue

\foreach \a in {10.2}
{
\draw[->,line width=1.5pt] (3*cos{70}+\a,3*sin{70}) arc (70:31:3);
\draw[->,line width=1.5pt] (3*cos{30}+\a,3*sin{30}) arc (30:-9:3);
\draw[->,line width=1.5pt] (3*cos{-10}+\a,3*sin{-10}) arc (-10:-49:3);
\draw[dashed, line width=1.5pt] (3*cos{-50}+\a,3*sin{-50}) 
arc (-50:-130:3);
\draw[->,line width=1.5pt] (3*cos{-130}+\a,3*sin{-130}) arc (-130:-171:3);
\draw[->,line width=1.5pt] (3*cos{-170}+\a,3*sin{-170}) arc (-170:-209:3);
\draw[->,line width=1.5pt] (3*cos{-210}+ \a,3*sin{-210}) arc (-210:-249:3);

\begin{scriptsize}
\draw[fill=blue] (3*cos{30}+\a,3*sin{30}) circle (2.5pt);
\draw[fill=blue] (3*cos{-10}+\a,3*sin{-10}) circle (2.5pt);
\draw[fill=blue] (3*cos{-50}+\a,3*sin{-50}) circle (2.5pt);
\draw[fill=blue] (3*cos{70}+\a,3*sin{70}) circle (2.5pt);
\draw[fill=blue] (3*cos{-130}+\a,3*sin{-130}) circle (2.5pt);
\draw[fill=blue] (3*cos{-170}+\a,3*sin{-170}) circle (2.5pt);
\draw[fill=blue] (3*cos{150}+\a,3*sin{150}) circle (2.5pt);
\draw[fill=blue] (3*cos{110}+\a,3*sin{110}) circle (2.5pt);

\draw (3*cos{70}+\a,3*sin{70}+0.1) node[above]{$P_1$};
\draw (3*cos{30}+0.8+\a,3*sin{30}-0.5) node[above]{$P_2$};
\draw (3*cos{-10}+\a,3*sin{-10}) node[right]{$P_3$};
\draw (3*cos{-50}+0.6+\a,3*sin{-50}-0.1) node[below]{$P_4$};
\draw (3*cos{-130}-0.2+\a,3*sin{-130}) node[below]{$P_{n-3}$};
\draw (3*cos{-170}-0.1+\a,3*sin{-170}) node[left]{$P_{n-2}$};
\draw (3*cos{150}-0.1+\a,3*sin{150}+0.1) node[left]{$P_{n-1}$};
\draw (3*cos{110}+\a,3*sin{110}+0.1) node[above]{$P_{n}$};
\end{scriptsize}

\draw (\a,0) node{\large 1};
}
\draw (0,0) node{\large 2};
\draw (4,-5.8) node[above]{The tournament 
$\mathcal Z_{n,2,1}=2\mathcal R_n^+ + \mathcal S_n^-$};
\end{tikzpicture}
~
\begin{tikzpicture}
[line cap=round,line join=round,>=latex,x=0.3cm,y=0.3cm]
\clip(-6.2,-5.8) rectangle (14.5,4.5);
\draw[->,line width=1.5pt] (3*cos{30},3*sin{30}) arc (30:69:3);
\draw[->,line width=1.5pt] (3*cos{-10},3*sin{-10}) arc (-10:29:3);
\draw[->,line width=1.5pt] (3*cos{-50},3*sin{-50}) arc (-50:-11:3);
\draw[dashed, line width=1.5pt] (3*cos{-50},3*sin{-50}) 
arc (-50:-130:3);

\draw[->,line width=1.5pt] (3*cos{-170},3*sin{-170}) arc(-170:-131:3);
\draw[->,line width=1.5pt] (3*cos{-210},3*sin{-210}) arc(-210:-171:3);

\begin{scriptsize}
\draw[fill=blue] (3*cos{30},3*sin{30}) circle (2.5pt);
\draw[fill=blue] (3*cos{-10},3*sin{-10}) circle (2.5pt);
\draw[fill=blue] (3*cos{-50},3*sin{-50}) circle (2.5pt);
\draw[fill=blue] (3*cos{70},3*sin{70}) circle (2.5pt);
\draw[fill=blue] (3*cos{-130},3*sin{-130}) circle (2.5pt);
\draw[fill=blue] (3*cos{-170},3*sin{-170}) circle (2.5pt);
\draw[fill=blue] (3*cos{150},3*sin{150}) circle (2.5pt);

\draw (3*cos{70},3*sin{70}+0.1) node[above]{$P_1$};
\draw (3*cos{30}+0.8,3*sin{30}-0.5) node[above]{$P_2$};
\draw (3*cos{-10},3*sin{-10}) node[right]{$P_3$};
\draw (3*cos{-50}+0.6,3*sin{-50}-0.1) node[below]{$P_4$};
\draw (3*cos{-130}-0.2,3*sin{-130}) node[below]{$P_{n-3}$};
\draw (3*cos{-170}-0.1,3*sin{-170}) node[left]{$P_{n-2}$};
\draw (3*cos{150}-0.1,3*sin{150}+0.1) node[left]{$P_{n-1}$};
\end{scriptsize}

*****************   Deuxième roue

\foreach \a in {10.2}
{
\draw[->,line width=1.5pt] (3*cos{70}+\a,3*sin{70}) arc (70:31:3);
\draw[->,line width=1.5pt] (3*cos{30}+\a,3*sin{30}) arc (30:-9:3);
\draw[->,line width=1.5pt] (3*cos{-10}+\a,3*sin{-10}) arc (-10:-49:3);
\draw[dashed, line width=1.5pt] (3*cos{-50}+\a,3*sin{-50}) 
arc (-50:-130:3);
\draw[->,line width=1.5pt] (3*cos{-130}+\a,3*sin{-130}) arc (-130:-171:3);
\draw[->,line width=1.5pt] (3*cos{-170}+\a,3*sin{-170}) arc (-170:-209:3);

\begin{scriptsize}
\draw[fill=blue] (3*cos{30}+\a,3*sin{30}) circle (2.5pt);
\draw[fill=blue] (3*cos{-10}+\a,3*sin{-10}) circle (2.5pt);
\draw[fill=blue] (3*cos{-50}+\a,3*sin{-50}) circle (2.5pt);
\draw[fill=blue] (3*cos{70}+\a,3*sin{70}) circle (2.5pt);
\draw[fill=blue] (3*cos{-130}+\a,3*sin{-130}) circle (2.5pt);
\draw[fill=blue] (3*cos{-170}+\a,3*sin{-170}) circle (2.5pt);
\draw[fill=blue] (3*cos{150}+\a,3*sin{150}) circle (2.5pt);

\draw (3*cos{70}+\a,3*sin{70}+0.1) node[above]{$P_1$};
\draw (3*cos{30}+0.8+\a,3*sin{30}-0.5) node[above]{$P_2$};
\draw (3*cos{-10}+\a,3*sin{-10}) node[right]{$P_3$};
\draw (3*cos{-50}+0.6+\a,3*sin{-50}-0.1) node[below]{$P_4$};
\draw (3*cos{-130}-0.2+\a,3*sin{-130}) node[below]{$P_{n-3}$};
\draw (3*cos{-170}-0.1+\a,3*sin{-170}) node[left]{$P_{n-2}$};
\draw (3*cos{150}-0.1+\a,3*sin{150}+0.1) node[left]{$P_{n-1}$};
\end{scriptsize}

\draw (\a,0) node{\large 1};
}
\draw (0,0) node{\large 2};
\draw (4,-5.8) node[above]{The tournament 
$\mathcal Z'_{n,2,1}=2\mathcal S_{n-1}^+ + \mathcal S_{n-1}^-$};
\end{tikzpicture}
\end{center}

\noindent 
By keeping the notation introduced in Section \ref{subsec-RSMassey},
we get the ranking vector \mbox{$\r=(r_i)_{1\leq i \leq n}$}
where $r_i=\a+\bb i$
with $\disp \bb=\frac{-\ell}{\ell+kn}=\frac{-1}{1+2n}$.
With the second equation of $(S_M)$, we find that 
$\a=\dfrac{n+1}{2(2n+1)}$. Indeed, we have
\begin{eqnarray*}
r_1+r_2+\cdots+r_n=0 &\iff & (\a+\bb) +(\a+2\bb) + \cdots + (\a+n \bb)=0\\
&\iff & n\a + \Big( \dfrac{n(n+1)}{2}\Big) \bb =0\\
&\iff & n \a = - \dfrac{n(n+1)}{2} \bb = \dfrac{n(n+1)}{2(2n+1)}.
\end{eqnarray*} 

Now, we consider $(\mathcal{Z}_{n,2,1})'=  
2 \cS_{n-1}^+ + \cS_{n-1}^-$, the tournament
obtained after the deletion of the player $P_n$ 
in the tournament  $\mathcal{Z}_{n,2,1}$. 
%
In this situation, 
we get the ranking vector $\r'=(r'_i)_{1\leq i \leq n}$
where $r'_i=\a'+\bb' i$
with $\disp \bb'=\frac{k-\ell}{\ell+k}=\frac{1}{3}$.
Thanks to the second equation of $(S'_M)$, we find that 
$\a'=-\dfrac{n}{6}$ because

\begin{eqnarray*}
r'_1+r'_2+\cdots+r'_{n-1}=0 & \iff & 
(\a'+\bb')+\cdots + \big(\a'+(n-1))\bb'\big)=0 \\
&\iff & (n-1) \a' +\dfrac{n(n-1)}{2} \bb' =0\\
& \iff & \a'= -\dfrac{n}{2} \bb'=-\dfrac{n}{6}.
\end{eqnarray*}

\noindent In conclusion, as $\beta<0$, by Remark~\ref{remarque-banale-Massey}, we found a strictly decreasing 
ranking vector $\r=(r_1,\ldots,r_n)$ for the tournament $\mathcal{Z}_{n,2,1}$
with $\disp r_i=\frac{1}{2(1+2n)}\left(-2i+n+1\right)$ and
a strictly increasing  ranking vector $\r'=(r'_1,\ldots,r'_{n-1})$ 
for the tournament $(\mathcal{Z}_{n,2,1})'$
with $\disp r'_i=\frac{1}{6}\left(2i-n\right)$.
In other terms, an inversion phenomenom occured : the ranking
for $\mathcal{Z}_{n,2,1}$ is \mbox{$P_1 \succ P_2 \succ \cdots \succ P_n$}
while it is $P_{n-1} \succ P_{n-2} \succ \cdots \succ P_1$
for $(\mathcal{Z}_{n,2,1})'$.

\subsection{The Colley method.}
\label{subsec_RSColley}
Let us first make a few remarks about Colley's method
 applied to the $\mathcal{Z}_{n,k,\ell}$ tournament.
We have to solve a system $C\r=\b$
where $C$ is the  Colley matrix 
associated to $\mathcal{Z}_{n,k,\ell}$. Let us recall that $C=M+2I$ where $M$ is the matrix studied in the previous section on the Massey method and $\b=\mathbf{e} +\frac{1}{2}\p$
with $ \p=(\ell,0,\cdots,0,-\ell)^T$.
We remark that the vector $\b$ is independent of $k$. 
As  $C\mathbf{e}=2\mathbf{e}$, instead of solving $C\r=\b$, we solve $C\tilde{\r}=\p$, with \mbox{$\r=\frac{1}{2}(\tilde{\r}+\mathbf{e})$}.\\

As done in the previous section, thanks to the structure of the matrix $C$, the coordinates of
 the ranking vector \mbox{$\t \r=(\t r_1,\t r_2,\cdots \t r_n)^T$}
 satisfy the  recurrence relation
\begin{equation*}\label{recurrence-RS-Colley}
-(k+\ell)\t r_i+2(k+\ell+1) \t r_{i+1}-(k+\ell) \t r_{i+2}=0,
\quad
{\rm for}\:\: i=1, \ldots, n
\end{equation*}
\noindent which  implies the 
existence of real numbers $\a$ and $\bb$ such that:
$$\t r_i=\a x^i+ \bb y^i, \quad \textrm{ for } i=1, \ldots, n$$
where $x$ and $y$ are the roots
of 
\begin{equation}
\tag{3}
(k+\ell)\l^2 -2(1+k+\ell)\l + k+\ell=0. \label{EqCaracteristique-RS-Colley}
\end{equation}
We get: 
$\disp x=1+\frac{1}{k+\ell}-\frac{\sqrt{2(k+\ell)+1}}{k+\ell}$ and
$\disp y=1+\frac{1}{k+\ell}+\frac{\sqrt{2(k+\ell)+1}}{k+\ell}$. \\
Therefore $0<x<1 <y$.\\

\begin{remark}\label{remarque-banale-Colley}
We have $\r=(r_1,r_2,\cdots, r_n)$ with $r_i=\frac{1}{2}(1+\t r_i)$.
Of course, $r_{i}>r_{i+1} $ if and only if $
\t r_i > \t r_{i+1}$ for all integers $i\geq 1$. Moreover:
$$\t r_i>\t r_{i+1}\Longleftrightarrow
\a x^i+\bb y^i> \a x^{i+1}+ \bb y^{i+1} \Longleftrightarrow
\a x^i(1-x) > \bb y^i(y-1).$$
 As $0<x<1<y$, we have   $x^i(1-x)>0$ and $y^i(y-1)>0$. Thus, the inequality $r_i>r_{i+1}$ is satisfied  if $\a>0>\bb$
and the reverse inequality 
$r_i<r_{i+1}$ is satisfied if $\a < 0 < \bb$.
\end{remark}

At last, as done for the Massey method, we can obtain $\alpha$ and $\beta$ in terms of $k$ and $\ell$ thanks to the system 
$$
\left\{
\begin{array}{l}
(2k+\ell+2) \t r_1-(k+\ell) \t r_2 -k \t r_n=\ell/2\\ \\
-k \t r_1-(k+\ell)\t r_{n-1}+(2k+\ell+2) \t r_n=-\ell/2.
\end{array}
\right.
$$

\medbreak
With all these remarks, we can prove the following proposition. The proof is based solely on elementary linear algebra. 

\begin{proposition}\label{prop:colley-ZRS}
For all positive integers $n$, $k$, and $l$,
an inversion paradox occurs for the Colley method
applied to the tournament
$\mathcal{Z}_{n,k,\ell}$ if and only if  $k>\ell>0$.
\end{proposition}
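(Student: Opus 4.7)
The plan is to exploit two structural facts about the Colley system $C\t\r=\p$ for $\mathcal Z_{n,k,\ell}$: an anti-symmetry under index reversal and the positive-definiteness of $C$. Both the Colley matrix $C$ and the tournament are invariant under the permutation $i\leftrightarrow n+1-i$ (the tournament is left-right symmetric), while the right-hand side $\p=(\ell,0,\ldots,0,-\ell)^T$ is anti-invariant. Since $M=X^T X$ is a Gram matrix, $C=M+2I$ is positive definite, hence the system has a unique solution, which must therefore satisfy $\t r_{n+1-i}=-\t r_i$. Combining this anti-symmetry with the general form $\t r_i=\a x^i+\bb y^i$ and $xy=1$ forces $\bb=-\a x^{n+1}$, so
\[
\t r_i=\a\bigl(x^i-x^{n+1-i}\bigr).
\]
A short computation yields $\t r_i-\t r_{i+1}=\a(1-x)(x^i+x^{n-i})$, which has the sign of $\a$ because $0<x<1$. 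Hence the ranking is $P_1\succ\cdots\succ P_n$ when $\a>0$ and its reverse when $\a<0$.

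To determine the sign of $\a$, I would run an energy argument. Left-multiplying $C\t\r=\p$ by $\t\r^T$ gives
\[
\t\r^T C\t\r=\t\r^T\p=\ell(\t r_1-\t r_n)=2\ell\,\t r_1,
\]
using $\t r_n=-\t r_1$. Positive-definiteness and $\p\neq 0$ force the left-hand side to be strictly positive, so $\t r_1>0$. But $\t r_1=\a x(1-x^{n-1})$ with $x(1-x^{n-1})>0$, whence $\a>0$. Therefore the Colley ranking for $\mathcal Z_{n,k,\ell}$ is $P_1\succ P_2\succ\cdots\succ P_n$.

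The same machinery handles the reduced tournament $(\mathcal Z_{n,k,\ell})'=k\cS_{n-1}^++\ell\cS_{n-1}^-$. The recurrence is identical (adjacent players still meet $k+\ell$ times), so the same $x,y$ appear; the analogous anti-symmetry $i\leftrightarrow n-i$ holds for $C'$ and $\p'=(\ell-k,0,\ldots,0,k-\ell)^T$ is anti-invariant, yielding $\t r'_i=\a'(x^i-x^{n-i})$ with the ranking controlled by $\operatorname{sign}(\a')$. The energy identity now reads $\t\r'^T C'\t\r'=2(\ell-k)\,\t r'_1$. If $k>\ell$, positive-definiteness forces $\t r'_1<0$, hence $\a'<0$, and the order reverses to $P_{n-1}\succ\cdots\succ P_1$, which is the inversion paradox. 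If $k<\ell$, then $\a'>0$ and the natural order persists; if $k=\ell$, then $\p'=0$ forces $\t\r'=0$ and all ratings tie, so no strict reversal occurs. Thus the inversion paradox occurs if and only if $k>\ell>0$. The main obstacle is packaging the anti-symmetry cleanly in order to bypass explicitly solving the $2\times 2$ boundary system for $(\a,\bb)$; once the relation $\bb=-\a x^{n+1}$ is in hand, the energy inequality delivers the sign of $\a$ in a single line.
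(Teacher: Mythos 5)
Your proof is correct, and it takes a genuinely different route from the one the paper sets up. The paper's argument goes through the boundary equations: having reduced the interior rows of $C\widetilde{\mathbf{r}}=\mathbf{p}$ to the recurrence with general solution $\widetilde{r}_i=\alpha x^i+\beta y^i$, it solves the $2\times 2$ system coming from the first and last rows to obtain $(\alpha,\beta)$ explicitly in terms of $k$ and $\ell$, and then reads the ranking off the signs via the criterion ``$\alpha>0>\beta$ gives the decreasing order'' (and its reverse). You bypass that computation entirely: the reversal permutation $i\mapsto n+1-i$ fixes $C$ and negates $\mathbf{p}$, so invertibility of $C$ forces the anti-symmetry $\widetilde{r}_{n+1-i}=-\widetilde{r}_i$, which together with $xy=1$ collapses the two unknowns to one ($\beta=-\alpha x^{n+1}$) and yields the exact identity $\widetilde{r}_i-\widetilde{r}_{i+1}=\alpha(1-x)\bigl(x^i+x^{n-i}\bigr)$; the sign of $\alpha$ then falls out of positive definiteness of $C=X^TX+2I$ through $\widetilde{\mathbf{r}}^TC\widetilde{\mathbf{r}}=2\ell\,\widetilde{r}_1>0$, and likewise for the reduced tournament. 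What this buys is the avoidance of any explicit solution of the boundary system, a uniform sign for \emph{all} consecutive differences (hence a clean ``if and only if'', including the tie at $k=\ell$ where $\mathbf{p}'=0$), and a template that works verbatim for the Massey system of Proposition~\ref{prop:Massey-ZRS}. Two small points to tidy up: the tournament itself is \emph{not} invariant under the reversal (the results flip); only the match schedule, hence $C$, is invariant, while $\mathbf{p}$ is negated --- which is exactly what your argument actually uses, so nothing breaks; and the reduced-tournament step implicitly assumes $n\geq 3$ so that at least two players remain, an assumption the paper shares.
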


\begin{example} 
The matrix $C$ and the vectors $\b$ and $\r$ associated to the tournament $\mathcal{Z}_{5,2,1}$ are
$$C=
\begin{pmatrix}
7&-3&0&0&-2\\
-3&8&-3&0&0\\
0&-3&8&-3&0\\
0&0&-3&8&-3\\
-2&0&0&-3&7
\end{pmatrix},
\quad 
\b=\begin{pmatrix}
1.5\\1\\1\\1\\0.5
\end{pmatrix},
\quad
\r=\begin{pmatrix}
71/126 \\ 66/126\\ 63/126\\ 60/126 \\ 55/126
\end{pmatrix}.
$$
We have $r_1>r_2>\cdots>r_5$.\\

The matrix $C'$ and the vectors $\b'$ and $\r'$ associated to $(\mathcal{Z}_{5,2,1})'$  are
$$C'=\begin{pmatrix}
5&-3&0&0\\
-3&8&-3&0\\
0&-3&8&-3\\
0&0&-3&5
\end{pmatrix}
,\quad
\b'=\begin{pmatrix}
0.5\\1\\1\\1.5
\end{pmatrix}
,\quad 
\r'=\begin{pmatrix}
35/92 \\43/92\\ 49/92\\ 57/92
\end{pmatrix}.
$$
We have $r'_4>r'_3>r'_2>r'_1$ and we find the result announced in Proposition~\ref{prop:colley-ZRS}.
\end{example}
\subsection{Some other remarks.}$\,$\\
\indent $\bullet$ We cannot get the same kind of results for the Borda method. Indeed, if we consider the tournament $\mathcal{Z}_{n,2,1}=2\mathcal{R}_n^+ + \mathcal{S}_n^-$, then $P_1$, $P_2$, \ldots, $P_{n-1}$ win 3 matches and $P_n$ wins 2 matches. Thus, we have $P_1=P_2=\cdots=P_{n-1}\succ P_n$ and we do not have a strict order on the players. Furthermore, if we delete $P_n$ then $P_1$ wins only 1 match, $P_2$, \ldots, $P_{n-2}$ win 3 matches and $P_{n-1}$ wins 2 matches. Thus, if $P_n$ is removed we get $P_2=P_3= \cdots =P_{n-2}\succ P_{n-1} \succ P_1$. We do not have an inversion paradox.\\

$\bullet$ The tournaments $\mathcal{Z}_{5,2,1}$ do not give an 
inversion paradox  when we use the Markov method. 
Indeed,  the matrix used for this method is 
$$G=\begin{pmatrix}
0&1/3&0&0&2/3\\
1&0&1/3&0&0\\
0&2/3&0&1/3&0\\
0&0&2/3&0&1/3\\
0&0&0&2/3&0
\end{pmatrix}.
$$

An eigenvector  associated to the eigenvalue $1$ of \mbox{$\alpha G+\frac{1-\alpha}{5}N_5$} is
$$\r=\begin{pmatrix}
\frac{11 \alpha^{4}-9 \alpha^{3}-9 \alpha^{2}+81 \alpha +81}{18 \alpha^{4}-6 \alpha^{3}-27 \alpha^{2}+54 \alpha +81}\\
\frac{4 \alpha^{4}-\alpha^{3}+9 \alpha^{2}+36 \alpha +27}{6 \alpha^{4}-2 \alpha^{3}-9 \alpha^{2}+18 \alpha +27}\\
\frac{5 \alpha^{4}+5 \alpha^{3}+6 \alpha^{2}+27 \alpha +27}{6 \alpha^{4}-2 \alpha^{3}-9 \alpha^{2}+18 \alpha +27}\\
\frac{8 \alpha^{4}+\alpha^{3}-3 \alpha^{2}+27 \alpha +27}{6 \alpha^{4}-2 \alpha^{3}-9 \alpha^{2}+18 \alpha +27}\\
1
\end{pmatrix}.
$$
For all $\alpha \in ]0,1[$, this gives the ranking $P_2\succ P_3 \succ P_4 \succ P_1 \succ P_5$.

If $P_5$ is deleted,  the  matrix used by the Markov method is
$$G'=\begin{pmatrix}
0& 1/3&0&0\\
1&0&1/3&0\\
0&2/3&0&1\\
0&0&2/3&0
\end{pmatrix}.
$$
  
An eigenvector  associated to the eigenvalue $1$ of \mbox{$\alpha G'+\frac{1-\alpha}{4}N_4$} is 
$$\r'=
\begin{pmatrix}
-\frac{\alpha^{2}+6 \alpha -9}{2 \alpha^{2}-3 \alpha +9}\\
-\frac{3 \left(2 \alpha^{2}-\alpha -3\right)}{2 \alpha^{2}-3 \alpha +9}\\
-\frac{3 \left(\alpha^{2}-2 \alpha -3\right)}{2 \alpha^{2}-3 \alpha +9}\\
1
\end{pmatrix}.
$$
For all $\alpha \in [0.75,1[$, this gives the ranking: $P_3 \succ P_4 \succ P_2 \succ P_1$.\\

We remark that if $\alpha \in [0.75,1[$ and $P_5$ is deleted the order is perturbed. 
However, we do not get the inverse order between 
$P_1$, $P_2$, $P_3$ and $P_4$.\\

$\bullet$ It may be tempting to think that adding a tournament of the type $\mathcal{R}_n$ has no effect on the outcome of a tournament with Massey or Colley method. However, the following example shows that it is not the case.

We set $\mathcal{Y}=\{(P_1,P_3,1,0),(P_3,P_2,1,0),(P_2,P_4,1,0)\}$.
The Massey method applied to this tournament gives the ranking: $P_1\succ P_3 \succ P_2 \succ P_4$.
When we add $2\mathcal{R}^+_4$ to this tournament, that is to say when we consider the tournament $\mathcal{Y}+2\mathcal{R}^+_4$, the Massey method gives: $P_1 \succ P_2  \succ P_3 \succ P_4$. Thus the ranking is \mbox{modified}.\\

The same  phenomenon appears with the Colley method. When we apply this method to $\mathcal{Y}$ we get $P_1 \succ P_3 \succ P_2 \succ P_4$ and when we consider $\mathcal{Y}+2\mathcal{R}^+_4$, we get $P_1 \succ P_2 \succ P_3 \succ P_4$.

\section{Conclusion.}
First of all, we  should keep in mind that the paradoxes we have presented appear because 
we have added cycles to a perfect tournament. We can then ask ourselves if it is reasonable to construct an order on the players from the result of a tournament with many cycles. Indeed, the cycles create instability in the result. For example, if we consider the cycle $P_1 \succ \cdots \succ P_{i-1} \succ P_i \succ P_{i+1} \succ \cdots \succ P_n \succ P_1$, then removing the player $P_i$ benefits $P_{i+1}$ because he is no longer beaten by $P_i$ and this disadvantages $P_{i-1}$ who beat $P_i$.
 The choice of player eliminated therefore has a decisive impact. 
However, even if cycles create instability in the final ranking, the aim of sports tournaments is to provide a ranking between players. That is their raison d'être.  We have to accept this instability, but we also have to be aware of it: the first can become the last.\\

We have shown that if a natural ranking method can be reduced by Condorcet tournaments and satisfies the long tournament property then the inversion paradox can appear. Furthermore, these properties are satisfied by some classical methods and we have constructed different examples of the inversion paradox. These examples are based on tournaments with a special structure.
For each match, the score is $1-0$ or $0-1$. So the difference in points between the two players is as small as possible. It is therefore possible to assume that the players are of roughly the same value, and so the possibility of reversing the order is not so surprising. However, in some sports a result of $1-0$ is very common. This is the case in soccer, where after scoring a goal a team can remain on defence for the rest of the match in order to retain a victory. The average difference in points between two teams depends on the sport considered, and the examples we have constructed are better suited to sports where the difference is small. However, this approach remains relevant for any competition in which we simply consider victory or defeat.

Actually, the Borda, Colley, and Markov methods use only the number of wins and losses. The score of a match is not taken into account. Thus,  there is no loss of generality by considering small scores.  In constrast, the situation is not the same with the Massey method. Consider a tournament $\mathcal{V}$ where each player $P_i$ plays one time against $P_j$ when $i \neq j$ and  $P_i$ wins  with the score $3-0$ if and only if $i<j$. The Massey matrix of the tournament $\mathcal{V}+2\mathcal{C}_{P_5 \succ P_4 \succ \cdots \succ P_1}$
 is the same as the one given for the tournament $\mathcal{X}=\mathcal{T}_{P_1 \succ P_2 \succ \cdots \succ P_5}+ 2 \mathcal{C}_{P_5 \succ P_4 \succ \cdots \succ P_1}$  in Section~\ref{sec:2} but now $p=(12,6,0-6,-12)$ and the ranking is $P_1 \succ P_2 \succ P_3 \succ P_4 \succ P_5$.
In the same way, if $P_5$ is deleted then the associated Massey matrix is the matrix $M'$ given in Section~\ref{sec:2} but $p'=(3,1,-1,-3)$ and we get the ranking  $P_1 \succ P_2 \succ \cdots \succ P_5$. Therefore, in this situation the inversion paradox does not appear with the Massey method while it appears with the Borda, Colley and Markov methods. 

The previous comment does not mean that the Massey method is more robust to the inversion paradox than other methods. Indeed, there are tournaments where the inversion paradox appears with the Massey method and does not appear with the Colley method. For example, consider the tournament $\mathcal{E}$ where $P_1$ wins against $P_2$ with the score $8-0$, $P_3$ wins against $P_1$ with the score $1-0$ and $P_3$ plays two matches against $P_1$ and the result of these two matches is $1-0$. Applying the Massey method to the tournament $\mathcal{E}$, we get $P_1 \succ P_3 \succ P_2$. Then, if $P_2$ is deleted we get $P_3 \succ P_1$.
There is therefore an inversion paradox, which is not the case with the Colley method.
 Indeed, with this tournament $\mathcal{E}$, the Colley method gives the ranking $P_3 \succ P_1 \succ P_2$ and if $P_2$ is deleted we get $P_3 \succ P_1$.\\


One shortcoming of the examples in Section 2, independent of the sport considered and the points differential between the teams, is that they all require a large number of matches between the players. Indeed,  we have adopted the structure used in the proof of Theorem~\ref{thm:inv}. This structure provides a simple and short proof but requires a lot of matches between players. Examples in Section 2 are constructed with this structure and this enabled us to highlight the inversion paradox for the Borda, Massey and Colley methods, but these may not be very realistic. This is why, in Section 3, we studied situations where the number of matches played by each player is low. We then considered tournaments, denoted $\mathcal{Z}_{n,k,\ell}$, in which $n$ players play $k+\ell$ matches.
For these tournaments, if $k>\ell>0$ the inversion paradox is again realised. So if $k=2$ and $\ell=1$, the players each play 3 matches and the inversion paradox occurs.\\

Finally, to avoid the inversion paradox, we need to consider methods that do not satisfy one of the three reasonable properties used in this article. Which property do we want to do without? As it seems difficult to want a method that is not natural, we leave the reader with the choice between doing without Condorcet's tournament reducibility or doing without the long tournament property.
Alternatively, we can use methods that verify all three properties and accept the risk of the inversion paradox appearing. So the question remains:
 What is the probability of the paradox occurring at sports tournaments?
Are there any sports where this paradox is more frequent? What is the probability of the occurence of the inversion paradox for each method? The answer to the last question would tell us whether one method is more robust than another to the inversion paradox. These questions can be studied from a practical point of view by looking at the results of different tournaments. We can also assume that the results follow a certain probabilistic model and try to estimate this probability for each method.   The probability of certain paradoxes has been already computed in the field of social choice, see e.g. \cite{Nurmi,Mossel}. In general, the probability of a paradox occurring is small. It remains to study what happens in the field of sport and whether the inversion paradox is frequent or not.


\section{Acknowledgments.}
The authors thank the referees for their comments, which helped to increase the quality of the paper.


 

%
%


\end{document}